\crefname{equation}{}{} 
\newtheorem{theorem}{Theorem}
\newtheorem{lemma}[theorem]{Lemma}
\newtheorem{corollary}[theorem]{Corollary}
\newtheorem{proposition}[theorem]{Proposition}
\newtheorem{definition}{Definition}
\theoremstyle{remark}
\newtheorem{remark}{Remark}
\newtheorem{example}{Example}
\newcommand{\bc}[1]{\left\{{#1}\right\}}
\newcommand{\br}[1]{\left({#1}\right)}
\newcommand{\bs}[1]{\left[{#1}\right]}
\newcommand{\abs}[1]{\left| {#1} \right|}
\newcommand{\E}[1]{\mathbb{E}\bs{{#1}}}
\newcommand{\cond}{\,|\,}
\newcommand{\bigcond}{\,\big|\,}
\DeclarePairedDelimiterX{\infdivx}[2]{(}{)}{  #1\;\delimsize\|\;#2}
\newcommand{\D}{D\infdivx}
\DeclarePairedDelimiterX{\cinfdivx}[3]{(}{)}{  #1\;\delimsize\|\;#2 \,\raisebox{-0.2ex}{\scalebox{1.5}{|}}\, #3}
\newcommand{\mc}{-\!\!\circ\!\!-}
\DeclareMathOperator{\SI}{SI}
\DeclareMathOperator{\II}{I}
\DeclareMathOperator{\HH}{H}
\DeclareMathOperator{\cII}{\mathcal{I}\,}
\DeclareMathOperator{\bias}{Bias}
\newcommand{\parent}{p}
\begin{document}
%
\title{Shared Information for a Markov Chain on a Tree\thanks{S. Bhattacharya and P. Narayan are with the Department of
Electrical and Computer Engineering and the Institute for Systems
Research, University of Maryland, College Park, MD 20742, USA.
E-mail: \{sagnikb, prakash\}@umd.edu. This work was supported by the U.S.
National Science Foundation under Grants CCF $1910497$ and
CCF $2310203$. A version of this paper was presented in part at the
2022 IEEE International Symposium on Information Theory~\cite{sb-pn-si-mct}.}}
%
%
%

\author{Sagnik Bhattacharya and Prakash Narayan}

\maketitle

\begin{abstract}
  Shared information is a measure of mutual dependence among multiple jointly distributed random variables with finite alphabets. For a Markov chain on a tree with a given joint distribution, we give a new proof of an explicit characterization of shared information. The Markov chain on a tree is shown to possess a global Markov property based on graph separation; this property plays a key role in our proofs. When the underlying joint distribution is not known, we exploit the special form of this characterization to provide a multiarmed bandit algorithm for estimating shared information, and analyze its error performance. 
\end{abstract}

\begin{IEEEkeywords}
  Global Markov property, Markov chain on a tree, multiarmed bandits, mutual information, mutual information estimation, shared information.
\end{IEEEkeywords}

%
\IEEEpeerreviewmaketitle

\section{Introduction}
%
%
%
%
\IEEEPARstart{L}{et} $X_1, \ldots, X_m$, $m\geq 2$ be random variables (rvs) with finite alphabets $\mathcal{X}_1, \ldots, \mathcal{X}_m$, respectively, and joint probability mass function (pmf) $P_{X_1 \cdots X_m}$. The \emph{shared information} $\SI(X_1, \ldots, X_m)$ of the rvs $X_1, \ldots, X_m$ is a measure of mutual dependence among them; and for $m=2$, $\SI(X_1, X_2)$ particularizes to mutual information $\II(X_1 \wedge X_2)$. Consider $m$ terminals, with terminal $i$ having privileged access to independent and identically distributed (i.i.d.) repetitions of $X_i$, $i = 1, \ldots, m$. Shared information $\SI(X_1, \ldots, X_m)$ has the operational meaning of being the largest rate of \emph{shared common randomness} that the $m$ terminals can generate in a distributed manner upon cooperating among themselves by means of interactive, publicly broadcast and noise-free communication\footnote{Our preferred nomenclature of shared information is justified by its operational meaning.}. Shared information measures the maximum rate of common randomness that is (nearly) independent of the open communication used to generate it.

The (Kullback-Leibler) divergence-based expression for $\SI(X_1, \ldots, X_m)$ was discovered in \cite[Example 4]{csiszar-narayan-secrecy-capacities}, where it was derived as an upper bound for a single-letter formula for the ``secret key capacity of a source model'' with $m$ terminals, a concept defined by the operational meaning above. The upper bound was shown to be tight for $m = 2$ and $3$. Subsequently, in a significant advance \cite{chan-tightness,chan-mutual-dependence,chan-hidden-flow}, tightness of the upper bound was established for arbitrary $m$, thereby imbuing $\SI(X_1, \ldots, X_m)$ with the operational significance of being the mentioned maximum rate of shared secret common randomness. The potential for shared information to serve as a natural measure of mutual dependence of $m \geq 2$ rvs, in the manner of mutual information for $m=2$ rvs, was suggested in \cite{narayan-isit}; see also \cite{tyagi-narayan-now}.

A comprehensive and consequential study of shared information, where it is termed ``multivariate mutual information'' \cite{chan-shared-information}, examines the role of secret key capacity as a measure of mutual dependence among multiple rvs and derives important properties including structural features of an underlying optimization along with connections to the theory of submodular functions. 

In addition to constituting secret key capacity for a multiterminal source model (\cite{csiszar-narayan-secrecy-capacities,chan-tightness,chan-mutual-dependence}), shared information also affords operational meaning for: maximal packing of edge-disjoint spanning trees in a multigraph (\cite{nitinawarat-secret-key,nitinawarat-perfect}; see also \cite{chan-linear-perfect,courtade-coded-cooperative}, \cite{chan-shared-information} for variant models); optimum querying exponent for resolving common randomness \cite{tyagi-how-many-queries}; strong converse for multiterminal secret key capacity \cite{tyagi-how-many-queries,tyagi-converses}; and also undirected network coding \cite{chan-hidden-flow}, data clustering \cite{chan-info-clustering}, among others. 

As argued in \cite{chan-shared-information}, shared information also possesses several attributes of measures of dependence among $m \geq 2$ rvs proposed earlier, including Watanabe's total correlation \cite{watanabe-tc} and Han's dual total correlation \cite{han-nonnegative} (both mentioned in \Cref{sec:preliminaries}). For $m=2$ rvs, measures of common information due to G\'acs-K\"orner \cite{gacs-common-information}, Wyner \cite{wyner-common-information} and Tyagi \cite{tyagi-common-information} have operational meanings; extensions to $m>2$ rvs merit further study (an exception \cite{liu-common-information} treats Wyner's common information).

For a given joint pmf $P_{X_1 \cdots X_m}$ of the  rvs $X_1, \ldots, X_m$, an explicit characterization of $\SI(X_1, \ldots, X_m)$ can be challenging (see \Cref{def:si} below); exact formulas are available for special cases (cf. e.g., \cite{csiszar-narayan-secrecy-capacities}, \cite{nitinawarat-secret-key}, \cite{chan-shared-information}). An efficient algorithm for calculating $\SI(X_1, \ldots, X_m)$ is given in \cite{chan-shared-information}. 

Our focus in this paper is on a Markov chain on a tree (MCT) \cite{Georgii+2011}. Tree-structured probabilistic graphical models are appealing owing to desirable statistical properties that enable, for instance, efficient algorithms for exact inference \cite{koller-friedman-pgm}, \cite{pearl-bayes}; decoding \cite{mackay-information-theory}, \cite{koller-friedman-pgm}; sampling \cite{feng-dynamic-sampling}; and structure learning \cite{chow-liu-trees}. An MCT can serve as a tractable tree-structured approximation to a given joint distribution arising in applications such as omniscience and secrecy generation \cite{csiszar-narayan-secrecy-capacities,chan-successive}, and signal clustering \cite{chan-clustering}. The mentioned tractability facilitates exact calculation of associated rate quantities. We take the tree structure of our model to be known; algorithms exist already for learning tree structure from data samples \cite{chow-liu-trees,chow-wagner}. We exploit the special form of $P_{X_1 \cdots X_m}$ in the setting of an MCT to obtain a simple characterization of shared information. When the joint pmf $P_{X_1 \cdots X_m}$ is not known but the tree structure is, the said characterization facilitates an estimation of shared information. 

In the setting of an MCT \cite{Georgii+2011}, our contributions are three-fold. First, we derive an explicit characterization of shared information for an MCT with a given joint pmf $P_{X_1 \cdots X_m}$ by means of a direct approach that exploits tree structure and Markovity of the pmf. A characterization of shared information had been sketched already in \cite{csiszar-narayan-secrecy-capacities}; our new proof does not seek recourse to a secret key interpretation of shared information, unlike in \cite{csiszar-narayan-secrecy-capacities}. Also, our proof differs in a material way from that in prior work \cite{chan-info-clustering} with a similar objective.

Second, we show an equivalence between the (weaker) original definition of an MCT \cite{Georgii+2011} and a (stronger) global  one based on separation in a graph \cite[Section 3.2.1]{lauritzen1996}. When $P_{X_1, \cdots, X_m}$ is assumed to be strictly positive, the two definitions are equivalent by the Hammersley-Clifford Theorem \cite[Theorem 3.9]{lauritzen1996}. We prove this equivalence even without said assumption, taking advantage of the underlying tree structure of the MCT; our proof method potentially is of independent interest.

Third, when $P_{X_1 \cdots X_m}$ is not known, with the mentioned characterization serving as a linchpin, we provide an approach for estimating shared information for an MCT. Formulated as a correlated bandits problem \cite{boda-prashanth-correlated-bandits}, this approach seeks to identify the best arm-pair across which mutual information is minimal. Using a uniform sampling of arms, redolent of sampling mechanisms in \cite{boda-narayan-universal-sampling-rate-distortion}, we provide an upper bound for the probability of estimation error and associated sample complexity. Our uniform sampling algorithm is similar to that in \cite{audibert-best-arm-identification}, \cite{boda-prashanth-correlated-bandits}; however, our modified analysis takes into account estimator bias, a feature that is not common in known bandit algorithms. Also, this approach can accommodate more refined bandit algorithms as also alternatives to the probability of error criterion such as regret \cite{cesa-bianchi_lugosi_2006}.

\Cref{sec:preliminaries} contains the preliminaries. Useful properties of an MCT are elucidated in \Cref{sec:mct-properties}. An explicit characterization of shared information for an MCT with a given $P_{X_1\cdots X_m}$ is provided in \Cref{sec:mact-shared-info}. \Cref{sec:estimating-si} describes our approach for estimating shared information when $P_{X_1\cdots X_m}$ is not known. \Cref{sec:concluding} contains closing remarks.

\section{Preliminaries}\label{sec:preliminaries}
    Let $X_1, \ldots, X_m$, $m \geq 2$, be rvs with finite alphabets $\mathcal{X}_1, \ldots, \mathcal{X}_m$, respectively, and joint pmf $P_{X_1 \cdots X_m}$. For $A \subseteq \mathcal{M} = \bc{1, \ldots, m}$, we write $X_A = (X_i, i \in  A)$ with alphabet $\mathcal{X}_A = \prod_{i \in A} \mathcal{X}_i$. Let $\pi = (\pi_1, \ldots, \pi_k)$ denote a $k$-partition of $\mathcal{M}$, $2 \leq k \leq m$. All logarithms and exponentiations are with respect to the base $2$, except $\ln$ and $\exp$ that are with respect to the base $e$.
    \begin{definition}[Shared information]\label{def:si}
        The \emph{shared information} of $X_1, \ldots, X_m$ is defined as
        \begin{align}
            &\SI(X_\mathcal{M}) \nonumber\\ 
            &= \min_{2 \leq k \leq m}\ \min_{\pi = (\pi_u, u =1, \cdots, k)}\ \frac{1}{k-1} \D{P_{X_\mathcal{M}}}{\prod_{u=1}^k P_{X_{\pi_u}}}\nonumber\\
            &= \min_{2 \leq k \leq m}\ \min_{\pi = (\pi_u, u =1, \cdots, k)}\ \frac{1}{k-1} \bs{\sum_{u=1}^k \HH(X_{\pi_u}) - \HH(X_{\mathcal{M}})}. \label{eq:si-def}
        \end{align}
    \end{definition}

    For a partition $\pi$ of $\mathcal{M}$ with $2 \leq \abs{\pi} \leq m$ atoms, it will be convenient to denote 
    \begin{align}
        \cII(\pi) = \frac{1}{\abs{\pi} - 1} \D{P_{X_\mathcal{M}}}{\prod_{u=1}^{\abs{\pi}} P_{X_{\pi_u}}} \label{eq:sub-si-eq}
    \end{align}
    so that $\SI(X_\mathcal{M}) = \min_{2 \leq \abs{\pi} \leq m} \cII(\pi)$.
    \begin{example}
        For $\mathcal{M} = \bc{1,2}$, we have 
        \begin{align*}
            \SI(X_1, X_2) = \text{mutual information } \II(X_1 \wedge X_2)
        \end{align*}
        and for $\mathcal{M} = \bc{1,2,3}$, it is checked readily that $\SI(X_1, X_2, X_3)$ is the minimum of $\II(X_1 \wedge X_2, X_3)$, $\II(X_2 \wedge X_1, X_3)$, $\II(X_3 \wedge X_1, X_2)$ and 
        \begin{align*}
            \frac{1}{2} \bs{\HH(X_1) + \HH(X_2) + \HH(X_3) - \HH(X_1, X_2, X_3)},
        \end{align*}
        and can be inferred from \cite[Examples 3,4]{csiszar-narayan-secrecy-capacities}.

        When $X_1, \ldots, X_m$ form a Markov chain $X_1 \mc \ldots \mc X_m$, it is seen that $\SI(X_\mathcal{M}) = \min_{1 \leq i \leq m-1} \II(X_i \wedge X_{i+1})$, the minimum mutual information between a pair of adjacent rvs in the chain \cite{csiszar-narayan-secrecy-capacities}.
    \end{example}

    Shared information possesses several properties befitting a measure of mutual dependence among multiple rvs. Clearly $\SI(X_\mathcal{M}) \geq 0$ and equality holds iff $P_{X_\mathcal{M}} = P_{X_A}P_{X_{A^c}}$ for some $A \subsetneq \mathcal{M}$; the latter follows from \cite[Theorem 5]{csiszar-narayan-secrecy-capacities} and \cite{chan-tightness}, \cite{chan-mutual-dependence}, \cite{chan-hidden-flow}. When $X_1, \ldots, X_m$ are bijections of each other, i.e., $\HH(X_i \cond X_j) = 0$, $1 \leq i \neq j \leq m$, then $\SI(X_\mathcal{M}) = \HH(X_1)$, as expected \cite{chan-shared-information}. 

    Next, the secret key capacity interpretation of $\SI(X_\mathcal{M})$ \cite{csiszar-narayan-secrecy-capacities,chan-tightness,chan-mutual-dependence,chan-hidden-flow,tyagi-narayan-now} implies that upon grouping the rvs $X_1, \ldots, X_m$ into disjoint teams represented by the atoms of any $k$-partition $\pi = (\pi_1, \ldots, \pi_k)$ of $\mathcal{M}$, $2 \leq k \leq m$, the resulting shared information of the teamed rvs $X_{\pi_1}, \ldots, X_{\pi_k}$ can be only larger, i.e.,
    \begin{align}
        \SI(X_{\pi_1}, \ldots, X_{\pi_k}) \geq \SI(X_1, \ldots, X_m). \label{eq:agglom-ineq}
    \end{align}
    Suppose that $\pi^* = (\pi_1^*, \ldots, \pi_l^*)$, $l \geq 2$, attains $\SI(X_\mathcal{M}) > 0$ (not necessarily uniquely) in \Cref{def:si}, i.e,
    \begin{align}
        \SI(X_\mathcal{M}) = \frac{1}{l-1} \D{P_{X_\mathcal{M}}}{\prod_{u=1}^l P_{X_{\pi_u^*}}}. \label{eq:pi-star-def}
    \end{align}
    A simple but useful observation based on \Cref{def:si}, \Cref{eq:agglom-ineq,eq:pi-star-def} is that upon agglomerating the rvs in each atom of an optimum partition $\pi^* = (\pi_1^*, \ldots, \pi_l^*)$, the resulting shared information $\SI(X_{\pi_1^*}, \ldots, X_{\pi_l^*})$ of the teams $X_{\pi_1^*}, \ldots, X_{\pi_k^*}$ equals the shared information $\SI(X_\mathcal{M})$ of the (unteamed) rvs $X_1, \ldots, X_m$, i.e, for these special teams \Cref{eq:agglom-ineq} holds with equality. This property has benefited information-clustering applications (cf. e.g., \cite{chan-shared-information}, \cite{chan-info-clustering}).

    Shared information satisfies the data processing inequality \cite{chan-shared-information}. For $X_\mathcal{M} = (X_1, \ldots, X_m)$, consider $X_\mathcal{M}' = (X_1', \ldots, X_m')$ where for a fixed $1 \leq j \leq m$, $X_i' = X_i$ for $i \in \mathcal{M} \setminus \bc{j}$ and $X_j'$ is obtained as the output of a stochastic matrix $W: \mathcal{X}_j \rightarrow \mathcal{X}_j$ with input $X_j$. Then, $\SI(X_{\mathcal{M}}') \leq \SI(X_\mathcal{M})$.

    It is worth comparing $\SI(X_\mathcal{M})$ with two well-known measures of correlation among $X_1, \ldots, X_m$, $m \geq 2$, of a similar vein. Watanabe's \emph{total correlation} \cite{watanabe-tc} is defined by 
    \begin{align}
        \mathcal{C}(X_\mathcal{M}) = \D{P_{X_\mathcal{M}}}{\prod_{i=1}^m P_{X_i}} = \sum_{i=1}^{m-1} \II(X_{i+1} \wedge X_1, \ldots, X_i) \label{eq:watanabe-tc}
    \end{align}
    and equals $(m-1) \cII(\pi)$ for the partition $\pi = (\bc{1}, \ldots, \bc{m})$ of $\mathcal{M}$ consisting of singleton atoms. By \Cref{eq:si-def,eq:watanabe-tc}, clearly 
    \begin{align}
        \SI(X_\mathcal{M}) \leq \frac{1}{m-1}\ \mathcal{C}(X_\mathcal{M}). \label{eq:si-tc-ineq}
    \end{align}
    Han's \emph{dual total correlation} \cite{han-nonnegative} is defined (equivalently) by 
    \begin{align}
        \mathcal{D}(X_\mathcal{M}) &= \sum_{i=1}^{m-1} \mathcal{D}iv{P_{X_i}}{P_{X_{i+1}\cdots X_m}}{P_{X_1\cdots X_{i-1}}} \nonumber \\
        &= \sum_{i=1}^m \HH(X_{\mathcal{M} \setminus \bc{i}}) - (m-1) \HH(X_\mathcal{M})\nonumber \\
        &= \HH(X_\mathcal{M}) - \sum_{i=1}^m \HH(X_i \cond X_{\mathcal{M} \setminus \bc{i}})\label{eq:1.9}\\
        &= \sum_{i=1}^{m-1} \II(X_i \wedge X_{i+1}, \ldots, X_m \cond X_1, \ldots, X_{i-1}),\nonumber
    \end{align} 
    (with conditioning vacuous for $i=1$) where the expression in \Cref{eq:1.9} is from \cite{Abdallah2010AMO}. By a straightforward calculation, these measures are seen to enjoy the sandwich 
    \begin{align}
        \frac{\mathcal{C}(X_\mathcal{M})}{m-1} \leq \mathcal{D}(X_\mathcal{M}) \leq (m-1)\ \mathcal{C}(X_\mathcal{M})\label{eq:1.10}
    \end{align}
    whereby we get from \Cref{eq:si-tc-ineq} and the first inequality in \Cref{eq:1.10} that 
    \begin{align}
        \SI(X_\mathcal{M}) \leq \frac{\mathcal{C}(X_\mathcal{M})}{m-1} \quad \text{and} \quad \SI(X_\mathcal{M}) \leq \mathcal{D}(X_\mathcal{M}).
    \end{align}
    This makes $\SI(X_\mathcal{M})$ a leaner measure of correlation than $\mathcal{C}(X_\mathcal{M})$ (upon setting aside the fixed constant $1/(m-1)$) or $\mathcal{D}(X_\mathcal{M})$. Significantly, the notion of an optimal partition in $\SI(X_\mathcal{M})$ in \Cref{eq:si-def} makes shared information an appealing measure for ``local'' as well as ``global'' dependencies among the rvs $X_1, \ldots, X_m$. 

    \begin{remark}
        When $\mathcal{M} = \bc{1,2}$, 
        \begin{align*}
            \SI(X_1, X_2) = \mathcal{C}(X_1, X_2) = \mathcal{D}(X_1, X_2) = \II(X_1 \wedge X_2).
        \end{align*}
    \end{remark}

    Our focus is on shared information for a Markov chain on a tree. 

    \begin{figure}[htbp]
        \centering
        \includegraphics[width=0.45\textwidth]{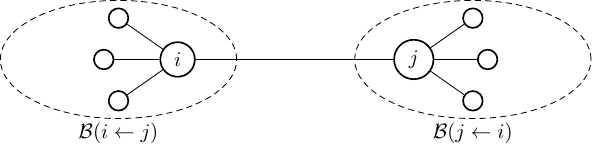}
        \caption{Notation for a Markov chain on a tree.}
        \label{diag:mct_diag}
    \end{figure}

    \begin{definition}[Markov Chain on a Tree]\label{def:mct-georgii}
        Let $\mathcal{G} = (\mathcal{M}, \mathcal{E})$ be a tree with vertex set $\mathcal{M} = \bc{1, \ldots, m}$, $m \geq 2$, i.e., a connected graph containing no circuits. For $(i,j)$ in the edge set $\mathcal{E}$, let $\mathcal{B}(i \leftarrow j)$ denote the set of all vertices connected with $j$ by a path containing the edge $(i,j)$. Note that $i \in \mathcal{B}(i \leftarrow j)$ but $j \notin \mathcal{B}(i \leftarrow j)$. See \Cref{diag:mct_diag}. The rvs $X_1, \ldots, X_m$ form a \emph{Markov Chain on a Tree} (MCT) $\mathcal{G}$ if for every $(i,j) \in \mathcal{E}$, the conditional pmf of $X_j$ given $X_{\mathcal{B}(i \leftarrow j)} = \bc{X_l: l \in \mathcal{B}(i \leftarrow j)}$ depends only on $X_i$. Specifically, $X_j$ is conditionally independent of $X_{\mathcal{B}(i \leftarrow j) \setminus \bc{i}}$ when conditioned on $X_i$. Thus, $P_{X_{\mathcal{M}}}$ is such that for each $(i,j) \in \mathcal{E}$, 
        \begin{align}\label{eq:mct}
            P_{X_j \cond X_{\mathcal{B}(i \leftarrow j)}} = P_{X_j \cond X_i},
        \end{align}
        or, equivalently,
        \begin{align}\label{eq:mct-mc-def}
            X_j \mc X_i \mc X_{\mathcal{B}(i \leftarrow j) \setminus \bc{i}}.
        \end{align}
        When $\mathcal{G}$ is a chain, an MCT reduces to a standard Markov chain.
    \end{definition}

    \begin{remark}
        With an abuse of terminology, we shall use $\mathcal{G} = (\mathcal{M}, \mathcal{E})$ to refer to a tree and also to the associated MCT.
    \end{remark}

    \begin{example}\label{eg:mct}
        Let $m = 2^l -1$ for some positive integer $l$. Consider a balanced binary tree with $l$ levels. Label the nodes progressively at each level and downwards, with the root node (at level $1$) being $1$ and the $2^{l-1}$ leaves (at level $l$) being $2^{l-1}, \ldots, 2^{l-1} + 2^{l-1} - 1 = 2^l -1 = m$. Let $X_1, Z_1, \ldots, Z_{m-1}$ be mutually independent rvs where $X_1 = \mathrm{Ber}(0.5)$ and $Z_i = \mathrm{Ber}(p_i)$ with $0 < p_i < 0.5$, $i=1, \ldots, m-1$. For $i=2, \ldots, m$, set $X_i = X_{\lfloor i/2 \rfloor } + Z_{i-1}$ where ``+'' denotes addition modulo $2$; note that $X_i$ is determined by $(X_1, Z_1, \ldots, Z_{i-1})$, and $P(X_i = 0) = P(X_i = 1) = 0.5$. 

        \begin{figure}[htbp]
            \centering
            \includegraphics[width=0.48\textwidth]{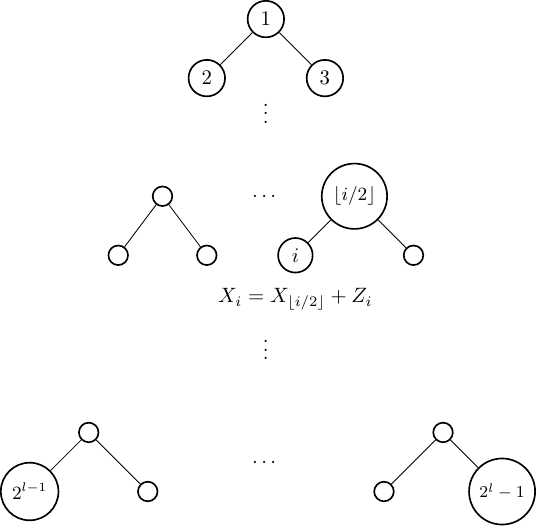}
            \caption{Example 2 of an MCT.}
            \label{diag:mct_example}
        \end{figure}
        
        Assign rv $X_i$ to vertex $i$, $i =1, \ldots, m$. See \Cref{diag:mct_example}. Then $X_1, \ldots, X_m$ form an MCT. Specifically, for any edge $(i,j)$ where $i$ is the parent of $j \geq 2$, we have
        \begin{align*}
            &P(X_j = x_j \cond X_{\mathcal{B}(i \leftarrow j)} = x_{\mathcal{B}(i \leftarrow j)})\nonumber\\
            &= P(X_j = x_j \cond X_i = x_i, X_{\mathcal{B}(i \leftarrow j) \setminus \bc{i}} = x_{\mathcal{B}(i \leftarrow j) \setminus \bc{i}})\\
            &= P(x_i + Z_{j-1} = x_j \cond X_i = x_i, X_{\mathcal{B}(i \leftarrow j) \setminus \bc{i}} = x_{\mathcal{B}(i \leftarrow j) \setminus \bc{i}})\\
            &= P(Z_{j-1} = x_j + x_i)\\
            &= P(X_j = x_j \cond X_i = x_i)
        \end{align*}
        where the last two inequalities are by the independence of $Z_{j-1}$ and $X_{\mathcal{B}(i \leftarrow j)}$, the latter rv being a function of $X_1, Z_{\bc{1, \ldots, m-1} \setminus \bc{j-1}}$.
    \end{example}

    \section{Properties of an MCT} \label{sec:mct-properties}
    We develop properties of an MCT that will play a role in characterizing shared information, and also are of independent interest. These include the concept of an agglomerated MCT, and notions of local and global Markov properties. 
    
    The main conclusion of this section is that the MCT as defined in \Cref{eq:mct-mc-def} has the global Markov property, which, in turn, implies \Cref{eq:mct-mc-def}; the proofs in this section, however, are based on \Cref{eq:mct-mc-def}. 
    
    We begin with agglomeration.
    \begin{lemma}\label{lem:mrf-mi}
        For the MCT $\mathcal{G} = (\mathcal{M}, \mathcal{E})$, for every $(i,j) \in \mathcal{E}$,
        \begin{align}
            \II(X_{\mathcal{B}(i \leftarrow j)} \wedge X_{\mathcal{B}(j \leftarrow i)}) = \II(X_i \wedge X_j),\label{eq:mrf-mi-2}
        \end{align}
        i.e.,
        \begin{align}
            X_{\mathcal{B}(i \leftarrow j) \setminus \bc{i}} \mc X_i \mc X_j \mc X_{\mathcal{B}(j \leftarrow i) \setminus \bc{j}}. \label{eq:mrf-mi-2b}
        \end{align}
    \end{lemma}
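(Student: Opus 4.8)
The plan is to establish the Markov-chain form \Cref{eq:mrf-mi-2b}, from which the mutual-information identity \Cref{eq:mrf-mi-2} is immediate: since $X_i$ is a coordinate of $X_{\mathcal{B}(i \leftarrow j)}$ and $X_j$ a coordinate of $X_{\mathcal{B}(j \leftarrow i)}$, monotonicity of mutual information already gives $\II(X_{\mathcal{B}(i \leftarrow j)} \wedge X_{\mathcal{B}(j \leftarrow i)}) \geq \II(X_i \wedge X_j)$, so only the reverse bound carries content. Writing $A = X_{\mathcal{B}(i \leftarrow j) \setminus \bc{i}}$, $B = X_i$, $C = X_j$, $D = X_{\mathcal{B}(j \leftarrow i) \setminus \bc{j}}$, the chain $A \mc B \mc C \mc D$ of \Cref{eq:mrf-mi-2b} is exactly the pair of conditional independences $A \perp (C,D) \cond B$ and $(A,B) \perp D \cond C$; granting these, the chain rule for mutual information makes the cross terms vanish and yields $\II(X_{\mathcal{B}(i \leftarrow j)} \wedge X_{\mathcal{B}(j \leftarrow i)}) = \II(X_i \wedge X_j)$ in one line. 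Now $A \perp (C,D) \cond B$ is precisely the statement $X_{\mathcal{B}(j \leftarrow i)} \mc X_i \mc X_{\mathcal{B}(i \leftarrow j) \setminus \bc{i}}$, while $(A,B) \perp D \cond C$ is the same statement with the roles of $i$ and $j$ exchanged. Hence it suffices to prove, for every edge $(i,j)$, the single branch-wise conditional independence $X_{\mathcal{B}(j \leftarrow i)} \mc X_i \mc X_{\mathcal{B}(i \leftarrow j) \setminus \bc{i}}$.

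I would prove this by induction on $\abs{\mathcal{B}(j \leftarrow i)}$, the number of vertices in the branch hanging off $j$. When $\abs{\mathcal{B}(j \leftarrow i)} = 1$, i.e.\ $j$ is a leaf, the claim is verbatim the defining property \Cref{eq:mct-mc-def}. For the inductive step, pick a leaf $\ell \neq j$ of the whole tree lying in $\mathcal{B}(j \leftarrow i)$ (the subtree induced on $\mathcal{B}(j \leftarrow i)$ has at least two vertices, hence at least two leaves, at most one of which is $j$, and any leaf of it other than $j$ is a leaf of $\mathcal{G}$), with neighbour $n \in \mathcal{B}(j \leftarrow i)$. Deleting $\ell$ leaves a tree on $\mathcal{M} \setminus \bc{\ell}$ whose marginal law is again an MCT; I would verify this directly from \Cref{eq:mct-mc-def} edge by edge, noting that removing $\ell$ either shortens a branch on which the defining conditional independence already holds (so it survives marginalisation) or leaves the relevant branch untouched. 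Applying the inductive hypothesis to edge $(i,j)$ in this smaller MCT gives $X_R \perp X_{\mathcal{B}(i \leftarrow j) \setminus \bc{i}} \cond X_i$ with $R = \mathcal{B}(j \leftarrow i) \setminus \bc{\ell}$. To reattach $\ell$, I invoke \Cref{eq:mct-mc-def} at edge $(n,\ell)$, namely $X_\ell \perp X_{\mathcal{M} \setminus \bc{\ell,n}} \cond X_n$: since $n \in R$, conditioning on $X_R$ already fixes $X_n$, so $X_\ell$ contributes the factor $P_{X_\ell \cond X_n}$ irrespective of $X_{\mathcal{B}(i \leftarrow j) \setminus \bc{i}}$, and a short marginalisation over $X_{\mathcal{B}(i \leftarrow j) \setminus \bc{i}}$ upgrades the hypothesis to $X_{\mathcal{B}(j \leftarrow i)} \perp X_{\mathcal{B}(i \leftarrow j) \setminus \bc{i}} \cond X_i$, closing the induction.

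The main obstacle is exactly this bootstrapping from the single-vertex conditional independences of \Cref{eq:mct-mc-def} to whole-branch conditional independence \emph{without} strict positivity of $P_{X_\mathcal{M}}$: I cannot lean on a Hammersley--Clifford factorisation and must argue throughout with conditional pmfs and conditional-independence relations, taking the usual care with conditioning events of zero probability. The two supporting facts where the tree structure does the real work are that deletion of a leaf preserves the MCT property and that the reattachment factorisation holds as written; once these are in hand, the reduction of the first paragraph delivers \Cref{eq:mrf-mi-2b}, and \Cref{eq:mrf-mi-2} follows immediately.
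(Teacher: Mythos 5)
Your proposal is correct, and it shares the paper's overall skeleton --- reduce the mutual-information identity \Cref{eq:mrf-mi-2} to a \emph{branch-level} conditional independence across the edge $(i,j)$, then bootstrap that from the single-vertex condition \Cref{eq:mct-mc-def} by induction on the tree, with no positivity assumption --- but the induction itself is genuinely different. The paper (Appendix A) roots the tree at $j$, states the branch independence for every vertex $i'$ in terms of its parent, and inducts on the \emph{height} of $i'$: the subtree below $i'$ is split into the subtrees rooted at its children via the entropy chain rule, with \Cref{eq:mct-mc-def} killing the cross terms and the hypothesis applied at each child. You instead induct on $\abs{\mathcal{B}(j\leftarrow i)}$, peeling off one leaf $\ell$ of that branch at a time; this requires the auxiliary fact that marginalizing out a leaf preserves the MCT property (which you correctly identify and which does hold, since deleting $\ell$ only weakens or leaves untouched each edge condition), a fact the paper's argument never needs. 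Your reattachment step is sound: from $X_\ell \perp X_{\mathcal{M}\setminus\{\ell,n\}} \mid X_n$ the relation $X_\ell \perp X_{\mathcal{B}(i\leftarrow j)\setminus\{i\}} \mid (X_i, X_R)$ follows by weak union and decomposition of conditional mutual information, which is the cleanest way to realize your ``short marginalisation'' without ever conditioning on zero-probability events --- so the obstacle you flag dissolves if you phrase that step with conditional information measures rather than conditional pmfs. What each approach buys: the paper's height induction handles all children of a vertex in one chain-rule pass and stays entirely within the original model; your leaf-peeling is arguably more elementary per step and isolates a reusable closure property (MCTs are closed under leaf marginalization) at the cost of an extra lemma.
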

    \begin{proof}
        The assertion \Cref{eq:mrf-mi-2} is proved in \Cref{app:1}. Turning to \Cref{eq:mrf-mi-2b}, by the chain rule \Cref{eq:mrf-mi-2} is equivalent to 
        \begin{align*}
            X_{\mathcal{B}(i \leftarrow j) \setminus \bc{i}} \mc &X_i \mc X_j\\
            \text{and } X_{\mathcal{B}(i \leftarrow j)} \mc &X_j \mc X_{\mathcal{B}(j \leftarrow i) \setminus \bc{j}}
        \end{align*}
        which, in turn, are together tantamount to \Cref{eq:mrf-mi-2b}.
    \end{proof}

    \begin{definition}[Agglomerated Tree]\label{def:agglomerated-tree}
        Consider a $k$-partition $\pi = (\pi_1, \ldots, \pi_k)$ of $\mathcal{M}$, $2 \leq k \leq m-1$, where each $\pi_i$, $1 \leq i \leq k$, is connected. The tree $\mathcal{G}' = (\mathcal{M}', \mathcal{E}')$ with vertex set $\mathcal{M}' = \bc{\pi_1, \ldots, \pi_k}$ and edge set \begin{align*}
            \mathcal{E}' = \bc{(\pi_{i'}, \pi_{j'}): \exists i \in \pi_{i'}, j \in \pi_{j'} \text{ s.t. } (i,j) \in \mathcal{E}}
        \end{align*}
        is termed an agglomerated tree. Note that $\mathcal{G}' = (\mathcal{M}', \mathcal{E}')$ is a tree since $\mathcal{G}  = (\mathcal{M}, \mathcal{E})$ is a tree and each $\pi_i$, $1 \leq i \leq k$, is connected.
    \end{definition}
    \begin{lemma}\label{lem:agglom-mct}
        Consider the agglomerated tree $\mathcal{G}' = (\mathcal{M}', \mathcal{E}')$ in \Cref{def:agglomerated-tree}. If $X_1, \ldots, X_m$ form an MCT $\mathcal{G} = (\mathcal{M}, \mathcal{E})$, then $X_{\pi_1}, \ldots, X_{\pi_k}$ form an MCT $\mathcal{G}' = (\mathcal{M}', \mathcal{E}')$.
    \end{lemma}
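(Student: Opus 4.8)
The plan is to verify the defining MCT condition \Cref{eq:mct-mc-def} for the agglomerated tree $\mathcal{G}' = (\mathcal{M}', \mathcal{E}')$ one edge at a time, reducing each such condition to an instance of \Cref{lem:mrf-mi} for $\mathcal{G}$. First I would fix an edge $(\pi_{i'}, \pi_{j'}) \in \mathcal{E}'$ and locate its ``underlying'' edge in $\mathcal{G}$: since $\pi_{i'}$ and $\pi_{j'}$ are each connected and $\mathcal{G}$ is a tree, there is a \emph{unique} edge $(i,j) \in \mathcal{E}$ with $i \in \pi_{i'}$ and $j \in \pi_{j'}$ (two such edges would create a circuit through the connected sets $\pi_{i'}$ and $\pi_{j'}$). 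Writing $A = \pi_{i'}$ and $B = \pi_{j'}$, the goal for this edge is the Markov relation $X_B \mc X_A \mc X_{\mathcal{B}'(\pi_{i'} \leftarrow \pi_{j'}) \setminus \bc{\pi_{i'}}}$, where $\mathcal{B}'$ denotes the branch set in $\mathcal{G}'$.

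The step I expect to be the main obstacle is the purely combinatorial correspondence between the cut in $\mathcal{G}'$ at $(\pi_{i'}, \pi_{j'})$ and the cut in $\mathcal{G}$ at $(i,j)$. Concretely, I would show that deleting $(\pi_{i'}, \pi_{j'})$ from $\mathcal{G}'$ and deleting $(i,j)$ from $\mathcal{G}$ induce the same split of the vertices of $\mathcal{M}$. Each atom $\pi_u$ is connected, and contains neither $i$ nor $j$ whenever $u \notin \bc{i', j'}$; since $(i,j)$ is the only edge joining the two components of $\mathcal{G}$ with $(i,j)$ removed, each such $\pi_u$ lies wholly inside $\mathcal{B}(i \leftarrow j)$ or wholly inside $\mathcal{B}(j \leftarrow i)$, while $\pi_{i'} \subseteq \mathcal{B}(i \leftarrow j)$ and $\pi_{j'} \subseteq \mathcal{B}(j \leftarrow i)$. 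Lifting the defining path in $\mathcal{G}'$ to $\mathcal{G}$ then shows that the atoms on the $\pi_{i'}$-side of $(\pi_{i'}, \pi_{j'})$ are exactly those contained in $\mathcal{B}(i \leftarrow j)$. Hence, as collections of rvs, $X_{\mathcal{B}'(\pi_{i'} \leftarrow \pi_{j'}) \setminus \bc{\pi_{i'}}} = X_{\mathcal{B}(i \leftarrow j) \setminus A}$.

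It then remains to establish $X_B \mc X_A \mc X_{\mathcal{B}(i \leftarrow j) \setminus A}$, which I would obtain from \Cref{lem:mrf-mi} by two applications of the standard graphoid axioms. From the four-term chain \Cref{eq:mrf-mi-2b} at $(i,j)$, the sub-chain $X_{\mathcal{B}(i \leftarrow j) \setminus \bc{i}} \mc X_i \mc X_{\mathcal{B}(j \leftarrow i)}$ holds; decomposition, restricting the right end to $B \subseteq \mathcal{B}(j \leftarrow i)$, yields $X_{\mathcal{B}(i \leftarrow j) \setminus \bc{i}} \mc X_i \mc X_B$. Splitting $\mathcal{B}(i \leftarrow j) \setminus \bc{i} = (A \setminus \bc{i}) \sqcup (\mathcal{B}(i \leftarrow j) \setminus A)$ and applying weak union to move $X_{A \setminus \bc{i}}$ into the conditioning gives $X_B \mc (X_i, X_{A \setminus \bc{i}}) \mc X_{\mathcal{B}(i \leftarrow j) \setminus A}$; since $(X_i, X_{A \setminus \bc{i}}) = X_A$, this is precisely the desired relation. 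Combining with the cut correspondence of the previous paragraph, the MCT condition \Cref{eq:mct-mc-def} holds at every edge of $\mathcal{G}'$, so $X_{\pi_1}, \ldots, X_{\pi_k}$ form an MCT on $\mathcal{G}'$. Because every edge is treated identically, the only genuine work is the cut correspondence, the information-theoretic content being supplied entirely by \Cref{lem:mrf-mi}.
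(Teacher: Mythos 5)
Your proposal is correct and follows essentially the same route as the paper: fix an edge of $\mathcal{G}'$, pass to an underlying edge $(i,j)\in\mathcal{E}$, invoke \Cref{lem:mrf-mi} to get the Markov chain across that edge, and then absorb $X_{\pi_{i'}\setminus\{i\}}$ into the conditioning via weak union/the chain rule. The only difference is that you make explicit the combinatorial cut correspondence that the paper compresses into the observation $X_{\mathcal{B}(\pi_{i'}\leftarrow\pi_{j'})\setminus\pi_{i'}}\subseteq X_{\mathcal{B}(i\leftarrow j)\setminus\{i\}}$.
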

    \begin{proof}
        By an obvious extension of \Cref{def:mct-georgii} to the agglomerated tree $\mathcal{G}' = (\mathcal{M}', \mathcal{E}')$, the lemma would follow upon showing that for every $(\pi_{i'}, \pi_{j'}) \in \mathcal{E}'$, it holds that 
        \begin{align}
            X_{\pi_{j'}} \mc X_{\pi_{i'}} \mc X_{\mathcal{B}(\pi_{i'} \leftarrow \pi_{j'}) \setminus \pi_{i'}}. \label{eq:agglom-mct-proof}
        \end{align}
        For any $(\pi_{i'}, \pi_{j'}) \in \mathcal{E}'$, there exist by \Cref{def:agglomerated-tree} $i \in \pi_{i'}$, $j \in \pi_{j'}$ with $(i,j) \in \mathcal{E}$. By \Cref{lem:mrf-mi},
        \begin{align}
            X_{\pi_{j'}} \mc X_i \mc X_{\mathcal{B}(i \leftarrow j) \setminus \bc{i}} \label{eq:agglom-mct-proof-2}
        \end{align}
        which, upon noting that $X_{\mathcal{B}(\pi_{i'} \leftarrow \pi_{j'}) \setminus \pi_{i'}} \subseteq X_{\mathcal{B}(i \leftarrow j) \setminus \bc{i}}$ and applying the chain rule of mutual information to \Cref{eq:agglom-mct-proof-2}, gives \Cref{eq:agglom-mct-proof}.
    \end{proof}
    Next, we turn to local and global Markovity of an MCT.
    \begin{definition}[Separation]
        For a tree $\mathcal{G} = (\mathcal{M}, \mathcal{E})$, let $A$, $B$ and $S$ be disjoint, nonempty subsets of $\mathcal{M}$. Then $S$ \emph{separates} $A$ and $B$ if for every $a \in A$ and $b \in B$, the path connecting $a$ and $b$ has at least one vertex $s = s(a,b)$ in $S$.
    \end{definition}
    The notion of maximally connected subset will be used below: $A' \subseteq A$ is a maximally connected subset of $A$ if $A'$ is connected and the addition to $A'$ of any vertex $u \in A \setminus A'$ renders $A' \cup \bc{u}$ to be disconnected. By convention, we shall take singleton elements to be connected. In general, any connected component of $A$ that is not maximally connected can be enlarged to absorb vertices outside it in $A$ that do not render the union to be disconnected.

    \begin{theorem}[Global Markov property]\label{th:g-g}
        For an MCT $\mathcal{G} = (\mathcal{M}, \mathcal{E})$, let $A$, $B$ and $S$ be disjoint, nonempty subsets of $\mathcal{M}$ such that $S$ separates $A$ and $B$. Then 
        \begin{align}
            X_A \mc X_S \mc X_B. \label{eq:g-g}
        \end{align}
        Conversely, if $X_1, \ldots, X_m$ are assigned to the vertices of a tree $\mathcal{G} = (\mathcal{M}, \mathcal{E})$ and satisfy \Cref{eq:g-g} for every such $A$, $B$, $S$, they form an MCT. 
    \end{theorem}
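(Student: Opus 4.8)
The plan is to treat the two directions separately, with the converse immediate and the forward (global) direction the substantive part. For the converse, suppose the vertex variables satisfy \eqref{eq:g-g} for every separating triple, and fix an edge $(i,j) \in \mathcal{E}$. I would apply the hypothesis to $A = \bc{j}$, $S = \bc{i}$, $B = \mathcal{B}(i \leftarrow j)\setminus\bc{i}$. Since every path from $j$ to a vertex of $\mathcal{B}(i \leftarrow j)$ must traverse the edge $(i,j)$ and hence pass through $i$, the singleton $\bc{i}$ separates $\bc{j}$ from $B$; the global property then yields $X_j \mc X_i \mc X_{\mathcal{B}(i\leftarrow j)\setminus\bc{i}}$, which is exactly the defining relation \eqref{eq:mct-mc-def} (and is trivial when $B$ is empty). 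Thus the variables form an MCT.

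For the forward direction I would first establish a root-based factorization of $P_{X_\mathcal{M}}$ directly from \eqref{eq:mct-mc-def}, without any positivity assumption. Root $\mathcal{G}$ at an arbitrary vertex $r$, let $\parent(j)$ denote the parent of $j \neq r$, and list the vertices in a topological order $v_1 = r, v_2, \ldots, v_m$ in which every parent precedes its children. For each $v_k$, all descendants of $v_k$ appear after $v_k$, so the predecessors $\bc{v_1, \ldots, v_{k-1}}$ lie entirely in $\mathcal{M}\setminus\text{subtree}(v_k) = \mathcal{B}(\parent(v_k)\leftarrow v_k)$ and contain $\parent(v_k)$. Applying \eqref{eq:mct-mc-def} to the edge $(\parent(v_k), v_k)$ and restricting the resulting Markov chain to this predecessor set collapses the chain-rule factor $P_{X_{v_k}\cond X_{v_1}, \ldots, X_{v_{k-1}}}$ to $P_{X_{v_k}\cond X_{\parent(v_k)}}$, whence
\begin{align*}
  P_{X_\mathcal{M}}(x_\mathcal{M}) = P_{X_r}(x_r)\prod_{j \neq r} P_{X_j\cond X_{\parent(j)}}(x_j \cond x_{\parent(j)}),
\end{align*}
a product in which each conditional factor is a function of the two endpoints of a single edge of $\mathcal{E}$.

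I would then convert this edge-wise factorization into a separation-based one. Given disjoint nonempty $A, B, S$ with $S$ separating $A$ and $B$, delete $S$ and observe that no connected component of $\mathcal{G}\setminus S$ can meet both $A$ and $B$ (else a path within it would avoid $S$). Let $\tilde A$ be the union of the components meeting $A$ and $\tilde B$ the remaining vertices of $\mathcal{M}\setminus S$, so $A \subseteq \tilde A$, $B \subseteq \tilde B$, and $\tilde A, \tilde B$ are disjoint. Every edge of $\mathcal{E}$ either lies inside a single component or has an endpoint in $S$; in both cases its two endpoints lie within $\tilde A \cup S$ or within $\tilde B \cup S$. Grouping the factors accordingly (assigning the root factor and any $S$-internal edges to the $A$-side) writes $P_{X_\mathcal{M}}(x_\mathcal{M}) = h(x_{\tilde A}, x_S)\, k(x_{\tilde B}, x_S)$. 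The standard fact that such a product factorization forces conditional independence then gives $X_{\tilde A} \mc X_S \mc X_{\tilde B}$, and restricting to $A \subseteq \tilde A$, $B \subseteq \tilde B$ yields \eqref{eq:g-g}.

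I expect the main obstacle to be the first step --- deriving the tree factorization from the local relation \eqref{eq:mct-mc-def} alone, without invoking Hammersley--Clifford or positivity. The delicate point is verifying that the topological-order predecessors of each vertex are contained in the far side $\mathcal{B}(\parent(v_k) \leftarrow v_k)$ so that the local Markov chain applies verbatim; once the factorization is in hand, the grouping of edge factors across the separator and the final conditional-independence lemma are routine. An alternative to the factorization route would be a direct induction (on $\abs{\mathcal{M}}$ by peeling a leaf, or on $\abs{S}$), repeatedly invoking \Cref{lem:mrf-mi}, but the factorization argument seems cleaner and makes the role of the tree structure most transparent.
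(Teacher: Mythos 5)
Your proof is correct, but it takes a genuinely different route from the paper's. The paper never passes through a factorization of $P_{X_\mathcal{M}}$: it decomposes $A$, $B$, $S$ and the remainder $N=\mathcal{M}\setminus(A\cup B\cup S)$ into maximally connected components, forms the agglomerated MCT of \Cref{def:agglomerated-tree} and \Cref{lem:agglom-mct}, absorbs each ``neutral'' component of $N$ into the $A$-side (using separation to rule out components adjacent to both sides), and then applies the local Markov property \Cref{eq:mct-local-prop-2} to the agglomerated tree; all of this ultimately rests on \Cref{lem:mrf-mi}, proved by induction on tree height in Appendix A. You instead orient the tree at a root, use the chain rule together with the decomposition property of conditional independence to obtain $P_{X_\mathcal{M}} = P_{X_r}\prod_{j\neq r}P_{X_j\cond X_{\parent(j)}}$ directly from \Cref{eq:mct-mc-def}, and then run the standard ``factorization implies global Markov'' argument by splitting the edge factors across the separator. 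In Lauritzen's hierarchy this amounts to showing MCT $\Rightarrow$ (F) by hand --- possible here because a tree can be oriented so that each chain-rule conditional collapses onto a single parent --- and then using only the positivity-free implication (F) $\Rightarrow$ (G); it is the reverse implication that needs Hammersley--Clifford. Your route is shorter and bypasses the appendix induction entirely, and it makes transparent why positivity is never needed; the paper's route earns its extra machinery because the agglomeration lemma and the local Markov property are reused in the proof of \Cref{thm:si-mct}. Two points you should make explicit when writing this up: (i) in the chain-rule recursion, when $P_{X_{\parent(v_k)}}(x_{\parent(v_k)})=0$ both sides vanish, so the factorization identity holds for all $x_\mathcal{M}$ under any convention for undefined conditionals; and (ii) the final step --- that $P_{X_\mathcal{M}}(x_\mathcal{M}) = h(x_{\tilde A},x_S)\,k(x_{\tilde B},x_S)$ forces $X_{\tilde A}\mc X_S\mc X_{\tilde B}$ --- should be checked without positivity, which it does satisfy: marginalizing gives $P_{X_{\tilde A}X_S}=h\cdot K$, $P_{X_{\tilde B}X_S}=k\cdot H$ and $P_{X_S}=H\cdot K$, whence $P_{X_\mathcal{M}}\,P_{X_S}=P_{X_{\tilde A}X_S}\,P_{X_{\tilde B}X_S}$ pointwise. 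Your converse is the paper's up to relabeling $A$ and $B$.
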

    \begin{remark}
        The Hammersley-Clifford Theorem \cite[Theorem 3.9]{lauritzen1996} implies the equivalence above for strictly positive joint pmfs, i.e., with $P(x_\mathcal{M}) > 0$ for every $x_\mathcal{M} \in \mathcal{X}_\mathcal{M}$. \Cref{th:g-g} shows that for an MCT, this equivalence holds also for a joint pmf $P_{X_\mathcal{M}}$ that is \emph{not} strictly positive.
    \end{remark}

    \begin{remark}
        The ``global'' Markov property \Cref{eq:g-g} \cite[Section 3.2.1]{lauritzen1996} clearly implies \Cref{eq:mct}, \Cref{eq:mct-mc-def} in \Cref{def:mct-georgii}. \Cref{th:g-g} asserts that for an MCT, \Cref{eq:g-g} is, in fact, equivalent to \Cref{eq:mct}, \Cref{eq:mct-mc-def}.
    \end{remark}

    Our proof of \Cref{th:g-g} in Appendix B relies on showing that an MCT has the ``local Markov property'' of \Cref{lem:mct-local-prop} below. The notion of \emph{neighborhood} is pertinent. \Cref{lem:mct-local-prop}, too, is proved in Appendix B.

    \begin{definition}[Neighborhood]
        For each $i \in \mathcal{M}$, its neighborhood is $\mathcal{N}(i) = \bc{j \in \mathcal{M}: (i,j) \in \mathcal{E}}$; note that $i \notin \mathcal{N}(i)$. Similarly, the neighborhood of $A \subsetneq \mathcal{M}$ is $\mathcal{N}(A) = \cup_{i \in A} \mathcal{N}(i) \setminus A$.
    \end{definition}
    
    \begin{lemma}[Local Markov property]\label{lem:mct-local-prop}
        Consider an MCT $\mathcal{G} = (\mathcal{M}, \mathcal{E})$. For each $i \in \mathcal{M}$, 
        \begin{align}
            X_i \mc X_{\mathcal{N}(i)} \mc X_{\mathcal{M} \setminus (\bc{i} \cup \mathcal{N}(i))}. \label{eq:mct-local-prop-1}
        \end{align}
        Furthermore, for every $A \subsetneq \mathcal{M}$ with no edge connecting any two vertices in it, 
        \begin{align}
            X_A \mc X_{\mathcal{N}(A)} \mc X_{\mathcal{M} \setminus (A \cup \mathcal{N}(A))}. \label{eq:mct-local-prop-2}
        \end{align}
    \end{lemma}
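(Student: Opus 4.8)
The plan is to prove both parts simultaneously by first establishing a product factorization of $P_{X_\mathcal{M}}$ along the tree and then reading off each conditional independence from a standard factorization criterion. Since the first assertion \Cref{eq:mct-local-prop-1} is exactly the case $A = \bc{i}$, $\mathcal{N}(A) = \mathcal{N}(i)$ of the second assertion \Cref{eq:mct-local-prop-2}, it suffices to treat a general $A \subsetneq \mathcal{M}$ with no edge joining two of its vertices.

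First I would show that, upon rooting $\mathcal{G}$ at an arbitrary $r \in \mathcal{M}$ and writing $\parent(v)$ for the neighbor of $v$ on the path from $v$ to $r$,
\[
P(x_\mathcal{M}) = P(x_r) \prod_{v \neq r} P\br{x_v \cond x_{\parent(v)}}.
\]
To this end, list the vertices as $v_1 = r, v_2, \ldots, v_m$ in a pre-order traversal, so that each vertex is listed after its parent and before all its descendants. For each $t$, the set $\bc{v_1, \ldots, v_{t-1}}$ then contains no descendant of $v_t$, whence $\bc{v_1, \ldots, v_{t-1}} \subseteq \mathcal{B}\br{\parent(v_t) \leftarrow v_t}$. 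The defining property \Cref{eq:mct-mc-def} applied to the edge $\br{\parent(v_t), v_t}$, namely $X_{v_t} \mc X_{\parent(v_t)} \mc X_{\mathcal{B}(\parent(v_t) \leftarrow v_t) \setminus \bc{\parent(v_t)}}$, then collapses conditioning on $\br{X_{v_1}, \ldots, X_{v_{t-1}}}$ to conditioning on $X_{\parent(v_t)}$; expanding $P(x_\mathcal{M})$ by the chain rule along this order gives the display.

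Second, set $S = \mathcal{N}(A)$ and $B = \mathcal{M} \setminus (A \cup S)$. Because $A$ has no internal edge, every edge incident to $A$ joins $A$ to $S$; in particular no edge joins $A$ to $B$. Hence, assigning to $f\br{x_{A \cup S}}$ every factor of the display having at least one argument in $A$ and to $g\br{x_{B \cup S}}$ all remaining factors, one checks that $f$ depends only on $x_{A \cup S}$ while $g$, involving no vertex of $A$, depends only on $x_{B \cup S}$, so that $P(x_\mathcal{M}) = f\br{x_{A \cup S}}\, g\br{x_{B \cup S}}$. The standard criterion that $P_{X_A X_S X_B} = f\br{x_A, x_S}\, g\br{x_B, x_S}$ implies $X_A \mc X_S \mc X_B$ then yields \Cref{eq:mct-local-prop-2}, and the specialization $A = \bc{i}$ yields \Cref{eq:mct-local-prop-1}.

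I expect the crux to be the first step: verifying, with no positivity assumption on $P_{X_\mathcal{M}}$, that the pre-order predecessors of each $v_t$ lie in $\mathcal{B}\br{\parent(v_t) \leftarrow v_t}$ so that \Cref{eq:mct-mc-def} is applicable, and that the conditionals appearing in the chain-rule expansion are well defined. The second step is routine, the only delicate point being that the factorization criterion is read on the event $\bc{P(x_S) > 0}$, off which \Cref{eq:mct-local-prop-2} is vacuous. For the first assertion alone, a route bypassing the global factorization is available: applying \Cref{lem:mrf-mi} to each edge $(i,j)$ with $j \in \mathcal{N}(i)$ shows that the branches $\bc{X_{\mathcal{B}(j \leftarrow i)} : j \in \mathcal{N}(i)}$ are mutually conditionally independent given $X_i$, from which \Cref{eq:mct-local-prop-1} follows at once.
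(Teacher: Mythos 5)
Your proof is correct, but it takes a genuinely different route from the paper's. The paper works entirely at the level of mutual-information identities: for \Cref{eq:mct-local-prop-1} it writes $\mathcal{M}\setminus(\bc{i}\cup\mathcal{N}(i))$ as the union of the punctured branches $\mathcal{B}(i_l\leftarrow i)\setminus\bc{i_l}$, expands $\II\br{X_i\wedge\cdot\cond X_{\mathcal{N}(i)}}$ by the chain rule, upper-bounds each term by a conditional mutual information given $X_{i_l}$ alone, and kills each term using the four-term Markov chain \Cref{eq:mrf-mi-2b} of \Cref{lem:mrf-mi}; it then derives \Cref{eq:mct-local-prop-2} by a second chain-rule expansion over $i\in A$ that reduces to \Cref{eq:mct-local-prop-1}. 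You instead root the tree, derive the directed factorization $P(x_\mathcal{M})=P(x_r)\prod_{v\neq r}P(x_v\cond x_{p(v)})$ from the edge-wise definition \Cref{eq:mct-mc-def} (your observation that the pre-order predecessors of $v_t$ lie in $\mathcal{B}(p(v_t)\leftarrow v_t)$ is exactly what makes the conditioning collapse, and your handling of null events is the right level of care given that no positivity is assumed), and then split the factors into $f(x_{A\cup S})g(x_{B\cup S})$ using the hypothesis that $A$ has no internal edge. This is the graphical-models route (directed factorization implies the Markov properties) rather than the paper's information-theoretic one. What your approach buys: both parts of the lemma fall out of a single factorization-plus-splitting argument, the statement for general $A$ is no harder than for a singleton, and the same factorization is a natural springboard toward the global Markov property of \Cref{th:g-g} (though for a general separator $S$ the grouping of factors by component requires more care than the clean split you get here). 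What the paper's route buys: it never leaves the language of entropies and mutual informations, reuses \Cref{lem:mrf-mi} --- which it needs anyway for \Cref{thm:si-mct} --- and avoids any discussion of conventions for conditionals on null events. Your closing alternative for \Cref{eq:mct-local-prop-1} alone (mutual conditional independence of the branches given $X_i$, via \Cref{lem:mrf-mi} applied edge by edge) is essentially the paper's argument recast as a factorization, so it is the closest of your proposals to what the authors actually do.
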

    \begin{remark}
        For $A \subsetneq \mathcal{M}$ as in \Cref{lem:mct-local-prop}, $\mathcal{N}(A) = \cup_{i \in A} \mathcal{N}(i)$.
    \end{remark}

    The relationships among the MCT definition \cref{eq:mct-mc-def}, and local and global Markov properties are summarized by the following lemma.
    \begin{lemma}
        It holds that\\ 
        (i) MCT and the global Markov property are equivalent;\\
        (ii) MCT implies the local Markov property;\\
        (iii) the local Markov property implies neither the global Markov property nor the MCT.
    \end{lemma}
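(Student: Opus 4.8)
The plan is to treat the three assertions separately, with the first two immediate and the third carrying all the weight. Part (i) is nothing but a restatement of \Cref{th:g-g}, which already asserts the two-way implication between the MCT property \Cref{eq:mct-mc-def} and the global Markov property \Cref{eq:g-g}. Part (ii) is likewise immediate, being precisely the content of \Cref{lem:mct-local-prop}, whose display \Cref{eq:mct-local-prop-1} is the local Markov property. So only part (iii) needs a genuine argument.

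For (iii), I would first observe that, by (i), it suffices to exhibit a single joint pmf assigned to the vertices of some tree $\mathcal{G} = (\mathcal{M}, \mathcal{E})$ that satisfies the local Markov property \Cref{eq:mct-local-prop-1} at every vertex yet violates the global property \Cref{eq:g-g} for some separating triple $(A, S, B)$; by the equivalence in (i) the same pmf then fails to be an MCT, settling both negative claims at once. Such a counterexample must be not strictly positive, since otherwise the Hammersley--Clifford theorem would force local and global to coincide. The construction I would use lives on the path $\mathcal{G}: 1 - 2 - 3 - 4 - 5$: let $W \sim \mathrm{Ber}(1/2)$, set $X_1 = X_2 = X_4 = X_5 = W$, and let $X_3 \equiv 0$ be constant, so that the joint pmf is supported on the two strings $00000$ and $11011$, each with probability $1/2$.

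The verification has two halves. To check \Cref{eq:mct-local-prop-1}, the mechanism is that at each vertex the conditioning neighborhood already pins down the variable: at vertex $2$ the neighbor $X_1$ (which lies in $\mathcal{N}(2) = \{1,3\}$) determines $X_2 = X_1$, so the conditional law of $X_2$ given $X_{\mathcal{N}(2)}$ is degenerate and hence trivially independent of $X_{\{4,5\}}$; symmetrically at vertex $4$ via $X_5$; at the leaves $1$ and $5$ the single neighbor likewise determines the leaf; and $X_3$ is constant, hence independent of everything. Thus \Cref{eq:mct-local-prop-1} holds at all five vertices. For the failure of \Cref{eq:g-g} I would take $A = \{1,2\}$, $S = \{3\}$, $B = \{4,5\}$, noting $S$ separates $A$ and $B$ on the path; since $X_3$ is constant, conditioning on it conveys no information, and $X_{\{1,2\}}$ and $X_{\{4,5\}}$ stay perfectly coupled through $W$, so $X_A \mc X_S \mc X_B$ fails. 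Hence the local property implies neither the global property nor, by (i), the MCT.

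The part I expect to be most delicate is not the arithmetic of the verification but getting the structure of the counterexample exactly right, together with the accompanying observation that no smaller tree works. For the trees on at most four vertices---the paths $P_3$, $P_4$ and the star $K_{1,3}$---the local property does in fact force the global one even without positivity (e.g. on $P_4$ the endpoint conditions $X_1 \mc X_2 \mc X_{\{3,4\}}$ and $X_4 \mc X_3 \mc X_{\{1,2\}}$ already assemble into the full Markov chain). So the genuine difficulty is to design the deterministic dependencies so that each local conditioning set \emph{does} contain a variable that explains away the corresponding pairwise coupling, while the global separator $\{3\}$ is deliberately rendered uninformative and so cannot block the long-range dependence between the two ends. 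This is exactly the manifestation of the intersection property failing in the absence of strict positivity, and recognizing it is the crux of the argument.
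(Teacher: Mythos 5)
Your proposal is correct, and parts (i) and (ii) coincide with the paper's proof, which likewise just points to \Cref{th:g-g} and \Cref{lem:mct-local-prop}. For part (iii) you take a genuinely different route: the paper imports Lauritzen's Example 3.5 (independent $U,Z \sim \mathrm{Ber}(1/2)$ with $W=U$, $Y=Z$, $X=WY$ on the path $U-W-X-Y-Z$), asserts the local property by citation, and exhibits the global failure by computing $\II(Z \wedge U,W \cond X) > 0$; you instead build your own example on the path $1-2-3-4-5$ with $X_1=X_2=X_4=X_5=W$ and $X_3$ constant. Your construction checks out: the local property holds at every vertex because each neighborhood contains a copy of $W$ (or, at vertex $3$, because a constant is independent of everything), while the separator $\{3\}$ carries no information, so $\II(X_{\{1,2\}} \wedge X_{\{4,5\}} \cond X_3) = \HH(W) = 1 > 0$, and the failure of the MCT then follows from (i) exactly as you argue. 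What your version buys is transparency — both the local verification and the global violation are immediate, whereas the paper's example requires a small conditional-entropy computation — and you make explicit the verification of the local property that the paper only asserts. The paper's example buys continuity with the standard literature (it is the textbook witness for local $\not\Rightarrow$ global). Your side remarks about minimal tree size and the failure of the intersection property are not needed for the proof but are accurate and harmless.
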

    \begin{proof}
        (i), (ii) The proofs are contained in \Cref{th:g-g} and \Cref{lem:mct-local-prop}, respectively.\\
        (iii) The following example is used in \cite[Example 3.5]{lauritzen1996} to show that local Markovity does not imply global Markovity. Let $U$ and $Z$ be independent $\mathrm{Ber}(0.5)$ rvs, and let $W=U$, $Y=Z$ and $X = WY$. Consider the graph
        \begin{align*}
            U - W - X - Y - Z
        \end{align*}
        which has the local Markov property. Since 
        \begin{align*}
            \II(Z \wedge U,W \cond X) &= \II(Z \wedge U \cond UZ)\\
            &= \HH(Z \cond UZ)\\
            &= 0.75\,\HH(Z \cond UZ = 0)\\
            &= 0.75\, h(1/3),
        \end{align*}
        the claim in (iii) is true.
    \end{proof}

    \section{Shared Information for a Markov Chain on a Tree}\label{sec:mact-shared-info}
    We present a new proof of an explicit characterization of $\SI(X_\mathcal{M})$ for an MCT. The expression in \Cref{thm:si-mct} below was obtained first in \cite{csiszar-narayan-secrecy-capacities} relying on its secrecy capacity interpretation. Specifically, it was computed using a linear program for said capacity and seen to equal an upper bound corresponding to shared information (\!\!\!\cite[Examples 4, 7]{csiszar-narayan-secrecy-capacities}). The new approach below works directly with the definition of shared information in \Cref{def:si}. Also, it differs materially from the treatment in \cite[Section 4]{chan-info-clustering} for a model that appears to differ from ours.

    While the upper bound for $\SI(X_\mathcal{M})$ below is akin to that involving secret key capacity in \cite{csiszar-narayan-secrecy-capacities}, the proof of the lower bound uses an altogether new method based on the structure of a ``good'' partition $\pi$ in \Cref{def:si}.
    
    \begin{theorem}\label{thm:si-mct}
        Let $\mathcal{G} = (\mathcal{M}, \mathcal{E})$ be an MCT with pmf $P_{X_{\mathcal{M}}}$ in \Cref{eq:mct}. Then
        \begin{align}\label{eq:si}
            \SI(X_\mathcal{M}) = \min_{(i,j) \in \mathcal{E}} \II(X_i \wedge X_j).
        \end{align}
    \end{theorem}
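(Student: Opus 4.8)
The plan is to prove the two inequalities $\SI(X_\mathcal{M}) \le \mu$ and $\SI(X_\mathcal{M}) \ge \mu$ separately, where I abbreviate $\mu = \min_{(i,j)\in\mathcal{E}} \II(X_i \wedge X_j)$.

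For the upper bound I would exhibit, for each edge $(i,j) \in \mathcal{E}$, a single feasible partition in \Cref{def:si} whose value is exactly $\II(X_i \wedge X_j)$. Deleting $(i,j)$ splits the tree into the connected pieces $\mathcal{B}(i \leftarrow j)$ and $\mathcal{B}(j \leftarrow i)$, giving the $2$-partition $\pi = (\mathcal{B}(i\leftarrow j),\, \mathcal{B}(j \leftarrow i))$ with $\cII(\pi) = \II\br{X_{\mathcal{B}(i\leftarrow j)} \wedge X_{\mathcal{B}(j \leftarrow i)}}$. By \Cref{lem:mrf-mi} this equals $\II(X_i \wedge X_j)$, so $\SI(X_\mathcal{M}) \le \cII(\pi) = \II(X_i \wedge X_j)$; minimizing over edges yields $\SI(X_\mathcal{M}) \le \mu$. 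This is the one place where the Markov structure is used, entering solely through \Cref{lem:mrf-mi}.

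For the lower bound I would show $\cII(\pi) \ge \mu$ for every partition $\pi = (\pi_1, \ldots, \pi_k)$, whether or not its atoms are connected. Starting from the second line of \eqref{eq:si-def} and applying the entropy chain rule to the blocks $X_{\pi_1}, \ldots, X_{\pi_k}$ in any order,
\begin{align*}
(k-1)\,\cII(\pi) = \sum_{u=1}^k \HH(X_{\pi_u}) - \HH(X_\mathcal{M}) = \sum_{u=2}^k \II\br{X_{\pi_u} \wedge X_{\pi_1 \cup \cdots \cup \pi_{u-1}}}.
\end{align*}
The crux is to order the atoms so that every summand is at least $\mu$. Form the quotient ``atom-adjacency'' graph $\mathcal{H}$ on vertices $\pi_1, \ldots, \pi_k$, joining two atoms whenever some edge of $\mathcal{E}$ runs between them; since $\mathcal{G}$ is connected, so is $\mathcal{H}$ (any path of $\mathcal{G}$ projects to a walk of $\mathcal{H}$). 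Relabel the atoms by a breadth-first order of $\mathcal{H}$, so that each $u \ge 2$ admits an edge $(i,j) \in \mathcal{E}$ with $i \in \pi_u$ and $j \in \pi_1 \cup \cdots \cup \pi_{u-1}$. Monotonicity of mutual information in each argument then gives
\begin{align*}
\II\br{X_{\pi_u} \wedge X_{\pi_1 \cup \cdots \cup \pi_{u-1}}} \ge \II(X_i \wedge X_j) \ge \mu,
\end{align*}
so the displayed sum is at least $(k-1)\mu$ and $\cII(\pi) \ge \mu$. Minimizing over $\pi$ gives $\SI(X_\mathcal{M}) \ge \mu$, and combined with the upper bound the theorem follows.

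The main obstacle is seeing the right way to decompose $\cII(\pi)$. The tempting route of taking an optimal $\pi$ and splitting a disconnected atom into its connected components fails: because of the $1/(k-1)$ normalization, a split lowers $\cII$ only when the mutual information across the split is below the current $\cII(\pi)$, which need not hold. The telescoping above sidesteps this entirely, as it never modifies $\pi$; the only structural input is that the quotient of a connected tree is connected, so a ``grow a connected set'' ordering of the atoms exists, and each newly added atom touches the earlier ones along at least one tree edge. I would finally verify the boundary cases $k=2$ and $k=m$ and confirm that the monotonicity step uses only that $X_i$ is a sub-tuple of $X_{\pi_u}$ and $X_j$ a sub-tuple of the earlier atoms. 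It is worth noting that this lower bound in fact needs only connectivity of $\mathcal{G}$, not Markovity, which makes precise why the Markov property is essential only for the matching upper bound.
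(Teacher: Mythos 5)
Your proof is correct. The upper bound is exactly the paper's: for each edge the cut $2$-partition $(\mathcal{B}(i\leftarrow j),\mathcal{B}(j\leftarrow i))$ evaluated via \Cref{lem:mrf-mi}. Your lower bound, however, takes a genuinely different and leaner route. The paper argues in two steps: for partitions with connected atoms it invokes \Cref{lem:agglom-mct} and the global Markov property (\Cref{th:g-g}) to establish the exact identity $\cII(\pi)=\frac{1}{k-1}\sum_{u=2}^{k}\II(X_{\pi_u}\wedge X_{\mathrm{parent}(\pi_u)})$ over an agglomerated MCT, and for partitions with a disconnected atom it runs a separate merge-or-split comparison (the $\pi'$/$\pi''$ argument culminating in \Cref{eq:main-proof-claim}) showing at least one of the two modifications cannot increase $\cII$ --- which is precisely the paper's fix for the normalization obstacle you correctly identify as blocking the naive ``split a disconnected atom'' approach. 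Your telescoping-plus-monotonicity argument handles all partitions uniformly: since only a lower bound on each summand $\II\br{X_{\pi_u}\wedge X_{\pi_1\cup\cdots\cup\pi_{u-1}}}$ is needed, monotonicity of mutual information under passing to sub-tuples replaces the Markov identities entirely, and the BFS ordering of the connected quotient graph guarantees each summand dominates some edge mutual information. Your closing observation --- that the inequality $\cII(\pi)\geq\min_{(i,j)\in\mathcal{E}}\II(X_i\wedge X_j)$ needs only connectivity of $\mathcal{G}$ and no Markovity --- is correct and goes beyond what the paper makes explicit, since the paper's Step 1 still invokes \Cref{th:g-g}. What the paper's longer route buys in exchange is the stronger structural conclusion, highlighted in its closing remarks as a stepping stone to non-tree models, that any partition with disconnected atoms is dominated by one with connected atoms; your proof establishes the theorem but not that additional fact.
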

    \begin{remark}\label{rem:min-edge}
        For later use, let $(\bar{i}, \bar{j})$ be the (not necessarily unique) minimizer in the right-side of \Cref{eq:si}
    \end{remark}
    \begin{example}
        For the MCT in \Cref{eg:mct}, $\SI(X_\mathcal{M}) = 1 - h(p^*)$, where $p^* = \max_{1 \leq i \leq m - 1} p_i < 0.5$. Thus, the $2$-partition obtained by cutting the (not necessarily unique) weakest correlating edge attains the minimum in \Cref{def:si}.
    \end{example}
    \begin{proof}
        As shown in \cite{csiszar-narayan-secrecy-capacities},
        \begin{align}\label{eq:si-ub}
            \SI(X_{\mathcal{M}}) \leq \min_{(i,j) \in \mathcal{E}} \II(X_i \wedge X_j)
        \end{align}
        and is seen as follows. For each $(i,j) \in \mathcal{E}$, consider a $2$-partition of $\mathcal{M}$, viz. $\pi = \pi((i,j)) = (\pi_1, \pi_2)$ where $\pi_1 = \mathcal{B}(i \leftarrow j)$, $\pi_2 = \mathcal{B}(j \leftarrow i)$. Then,
        \begin{align}
            \II(X_{\pi_1} \wedge X_{\pi_2}) &= \II(X_{\mathcal{B}(i \leftarrow j)} \wedge X_{\mathcal{B}(j \leftarrow i)}) \nonumber \\
            &= \II(X_i \wedge X_j), \text{ by \Cref{lem:mrf-mi}.} \label{eq:si-lb-a}
        \end{align}
        Hence, 
        \begin{align*}
            \SI(X_{\mathcal{M}}) \leq \II(X_{\pi_1} \wedge X_{\pi_2}) = \II(X_i \wedge X_j), \qquad (i,j) \in \mathcal{E} 
        \end{align*}
        leading to \Cref{eq:si-ub}.
        
        Next, we show that
        \begin{align}\label{eq:si-lb}
            \SI(X_{\mathcal{M}}) \geq \min_{(i,j) \in \mathcal{E}} \II(X_i \wedge X_j).
        \end{align}
        This is done in two steps. First, we show that for any $k$-partition $\pi$ of $\mathcal{M}$, $2 \leq k \leq m$, \emph{with (individually) connected atoms}, 
        \begin{align*}
            \cII(\pi) \geq \min_{(i,j) \in \mathcal{E}} \II(X_i \wedge X_j).
        \end{align*}
        Second, we argue that for any $k$-partition $\pi = (\pi_1, \ldots, \pi_k)$ containing disconnected atoms, there exists a $k'$-partition $\pi' = (\pi', \ldots, \pi'_{k'})$, possibly with $k' \neq k$, and with fewer disconnected atoms such that $\cII(\pi') \leq \cII(\pi)$.
        
        \textbf{Step 1:} Let $\pi = (\pi_1, \ldots, \pi_k)$, $k \geq 2$, be a $k$-partition with each atom being a connected set. By \Cref{lem:agglom-mct}, $X_{\pi_1}, \ldots, X_{\pi_k}$ form an agglomerated MCT $\mathcal{G}' = (\mathcal{M}', \mathcal{E}')$ as in \Cref{def:agglomerated-tree}. Furthermore, by \Cref{lem:mrf-mi}, if $\pi_u$ and $\pi_v$ in $\mathcal{M}'$ are connected by an edge $(\pi_u, \pi_v) \in \mathcal{E}'$, then there exist $i \in \pi_u$, $j \in \pi_v$, say, such that $(i,j) \in \mathcal{E}$, whence
        \begin{align}\label{eq:amct}
            \II(X_{\pi_u} \wedge X_{\pi_v}) = \II(X_i \wedge X_j).
        \end{align}
        Now, let $\pi_1, \ldots, \pi_k$ be an enumeration of the atoms, obtained from a breadth-first search \cite[Ch. 22]{clrs} run on the agglomerated tree with $\pi_1$ as the root vertex. Then, 
        \begin{align}
            \cII(\pi) &= \frac{1}{k-1} \D{P_{X_{\mathcal{M}}}}{\prod_{u=1}^k P_{X_{\pi_u}}}\nonumber\\
            &= \frac{1}{k-1} \bs{\sum_{u=1}^k \br{\HH(X_{\pi_u}) - \HH(X_{\pi_u} \cond X_{\pi_1}, \ldots, X_{\pi_{u-1}})}}\nonumber\\
            &= \frac{1}{k-1} \sum_{u=2}^{k} \II(X_{\pi_{u}} \wedge X_{\pi_1}, \ldots, X_{\pi_{u-1}})\nonumber\\
            &= \frac{1}{k-1} \sum_{u=2}^{k} \II(X_{\pi_{u}} \wedge X_{\mathrm{parent}(\pi_u)}) \label{eq:step-1-proof}\\
            &\geq \min_{(\pi_u,\pi_v) \in \mathcal{E}'} \II(X_{\pi_u} \wedge X_{\pi_v}) \nonumber\\
            &\geq \min_{(i,j) \in \mathcal{E}} \II(X_i \wedge X_j).\nonumber
        \end{align}
        By the breadth-first search algorithm \cite[Ch. 22]{clrs}, $\pi_1, \ldots, \pi_{u-1}$ are either at the same depth as $\pi_u$ or are above it (and include $\mathrm{parent}(\pi_u)$). This, combined with \Cref{th:g-g}, gives \Cref{eq:step-1-proof}. The last inequality is by \Cref{eq:amct}.

        \textbf{Step 2:} Consider first the case $k=2$. Take any $2$-partition $\pi = (\pi_1, \pi_2)$ with possibly disconnected atoms, where $\pi_1 = \cup_{\rho=1}^r C_\rho$ and $\pi_2 = \cup_{\sigma = 1}^s D_\sigma$ are unions of disjoint components. Since $\pi_1$ is connected to $\pi_2$, some $C_\rho$ and $D_\sigma$ must be connected by some edge $(i,j)$ in $\mathcal{E}$, so that   
        \begin{align*}
            &\cII(\pi) = \II(X_{\pi_1} \wedge X_{\pi_2}) \geq \II(X_{C_\rho} \wedge X_{D_\sigma})\\ 
            &\qquad\qquad\qquad\qquad\qquad\qquad\geq \II(X_i \wedge X_j) \geq \II(X_{\bar{i}} \wedge X_{\bar{j}})
        \end{align*}
        where the final lower bound, with $\bar{i}, \bar{j}$ as in \Cref{rem:min-edge}, is achieved by the $2$-partition with connected atoms $(\mathcal{B}(\bar{i} \leftarrow \bar{j}), \mathcal{B}(\bar{j} \leftarrow \bar{i}))$ as in \Cref{eq:si-lb-a}.

        Next, consider a $k$-partition $\pi = (\pi_1, \ldots, \pi_k)$, $k \geq 3$, and suppose that the atom $\pi_1$ is not connected. Without loss of generality, assume $\pi_1$ to be the (disjoint) union of maximally connected subsets $A_1, \ldots, A_t$, $t \geq 2$, of $\pi_1$ (which, at an extreme, can be the individual vertices constituting $\pi_1$).

        Take any $A_l$, say $A_l = A_{\bar{l}}$, and consider all its boundary edges, namely those edges for which one vertex is in $A_{\bar{l}}$ and the other outside it. As $A_{\bar{l}}$ is maximally connected in $\pi_1$, for each boundary edge the outside vertex cannot belong to $\pi_1$ and so must lie in $\mathcal{M} \setminus \pi_1$. Also, every such outside vertex associated with $A_{\bar{l}}$ must be the root of a subtree and, like $A_{\bar{l}}$, every $A_l$, $l \neq \bar{l}$, too, must be a subset of one such subtree linked to $A_{\bar{l}}$ -- owing to connectedness within $A_{\bar{l}}$. Furthermore, since $A_1, \ldots, A_t$ are connected, and only through the subtrees rooted in $\mathcal{M} \setminus \pi_1$, there must exist at least one $A_l$ such that all $A_{l'}$s, $l' \neq l$, are subsets of one subtree linked to $A_l$. In other words, denoting this $A_l$ as $A$, we note that $A$ has the property that 
        $$
            \pi_1 \setminus A = \bigcup_{\substack{l \in \bc{1, \cdots, t}:\\ A_l \neq A}} A_l
        $$ 
        is contained entirely in a subtree rooted at an outside vertex associated with $A$ and lying in $\mathcal{M} \setminus \pi_1$. Let this vertex be $j \in \mathcal{M} \setminus \pi_1$, and let $\pi_u \in \pi$ be the atom that contains $j$. Since vertex $j$ separates $A$ from $\pi_1 \setminus A$, so does $\pi_u$. By \Cref{th:g-g}, it follows that 
        \begin{align*}
            A \mc \pi_u \mc \pi_1 \setminus A
        \end{align*}
        whereby, using the data processing inequality,  
        \begin{align}
            \II(X_A \wedge X_{\pi_1 \setminus A}) \leq \II(X_{\pi_u} \wedge X_{\pi_1 \setminus A}) \leq \II(X_{\pi_u} \wedge X_{\pi_1}). \label{eq:6}
        \end{align}

        Next, consider the $(k-1)$-partition $\pi'$ and the $(k+1)$-partition $\pi''$ of $\mathcal{M}$, defined by 
        \begin{align}
            \pi' &= \br{\pi_1 \cup \pi_u, \bc{\pi_v}_{v \neq 1, v \neq u}},\label{eq:new-part-1}\\
            \pi'' &= \br{\pi_1 \setminus A, A, \pi_u, \bc{\pi_v}_{v \neq 1, v \neq u}}\label{eq:new-part-2}.
        \end{align}
        Then, 
        \begin{align*}
            \cII(\pi) &= \frac{1}{k-1} \Bigg[\HH(X_{\pi_1}) + \HH(X_{\pi_u}) \\
             &\qquad\qquad\qquad\qquad + \sum_{v \neq 1, v \neq u} \HH(X_{\pi_v}) - \HH(X_{\mathcal{M}})\Bigg],\\
            \cII(\pi') &= \frac{1}{k-2} \Bigg[\HH(X_{\pi_1 \cup \pi_u}) + \sum_{v \neq 1, v \neq u} \HH(X_{\pi_v}) \\
            &\qquad\qquad\qquad\qquad\qquad\qquad\qquad\qquad - \HH(X_{\mathcal{M}})\Bigg],\\
            \cII(\pi'') &= \frac{1}{k} \Bigg[\HH(X_{\pi_1 \setminus A}) + \HH(X_A) + \HH(X_{\pi_u}) \\
            &\qquad\qquad\qquad\qquad+ \sum_{v \neq 1, v \neq u} \HH(X_{\pi_v}) - \HH(X_{\mathcal{M}})\Bigg].
        \end{align*}
        We claim that 
        \begin{align}
            \cII(\pi) \geq \min\bc{\cII(\pi'), \cII(\pi'')} \label{eq:main-proof-claim}.
        \end{align}
        Referring to \Cref{eq:new-part-1,eq:new-part-2}, we can infer from the claim \Cref{eq:main-proof-claim} that for a given $k$-partition $\pi$ with a disconnected atom $\pi_1$ as above, merging a disconnected atom with another atom (as in \Cref{eq:new-part-1}) or breaking it to create a connected atom (as in \Cref{eq:new-part-2}), lead to partitions $\pi'$ or $\pi''$, of which at least one has a lower $\cII$-value than $\pi$. This argument is repeated until a final partition with connected atoms is reached that has the following form: considering the set of all maximally connected components of the atoms of $\pi = (\pi_1, \ldots, \pi_k)$, the final partition will consist of \emph{connected} unions of such components. (A connected $\pi_i$ already constitutes such a component.)

        It remains to show \Cref{eq:main-proof-claim}. Suppose \Cref{eq:main-proof-claim} were not true, i.e., 
        \begin{align*}
            \cII(\pi) < \min\bc{\cII(\pi'), \cII(\pi'')}.
        \end{align*}
        Then, 
        \begin{align}
            \cII(\pi) < \cII(\pi') &\Leftrightarrow (k-2) \cII(\pi) < (k-2) \cII(\pi')\nonumber\\
            &\Leftrightarrow \II(X_{\pi_u} \wedge X_{\pi_1}) < \cII(\pi), \label{eq:10}
        \end{align}
        and similarly,
        \begin{align}
            \cII(\pi) < \cII(\pi'') &\Leftrightarrow k \cII(\pi) < k \cII(\pi'')\nonumber\\
            &\Leftrightarrow \cII(\pi) < \II(X_{\pi_1 \setminus A} \wedge X_A) \label{eq:11}
        \end{align}
        where the second equivalences in \Cref{eq:10,eq:11} are obtained by straightforward manipulation. By \Cref{eq:10,eq:11},
        \begin{align*}
            \II(X_{\pi_u} \wedge X_{\pi_1}) < \II(X_{\pi_1 \setminus A} \wedge X_A)
        \end{align*}
        which contradicts \Cref{eq:6}. Hence, \Cref{eq:main-proof-claim} is true. 
    \end{proof}

    \section{Estimating $\SI(X_\mathcal{M})$ for an MCT}\label{sec:estimating-si}
    We consider the estimation of $\SI(X_\mathcal{M})$ when the pmf $P_{X_\mathcal{M}}$ of $X_\mathcal{M} = (X_1, \ldots, X_m)$ is unknown to an ``agent'' who, however, is assumed to know the tree $\mathcal{G} = (\mathcal{M}, \mathcal{E})$. We assume further in this section that $\mathcal{X}_1 = \cdots = \mathcal{X}_m = \mathcal{X}$, say, and also that the minimizing edge $(\bar{i}, \bar{j})$ on the right side of \Cref{eq:si} is unique. By \Cref{thm:si-mct}, $\SI(X_\mathcal{M})$ equals the minimum mutual information across an edge in the tree $\mathcal{G}$. Treating the determination of this edge as a correlated bandits problem of best arm pair identification, we provide an algorithm to home in on it, and analyze its error performance and associated sample complexity. \emph{The estimate of shared information is taken to be the mutual information across the best arm-pair thus identified.} Our estimation procedure is motivated by the form of $\SI(X_\mathcal{M})$ in \Cref{thm:si-mct}.
    
    \subsection{Preliminaries}
        As stated, estimation of $\SI(X_\mathcal{M})$ for an MCT will entail estimating $\II(X_i \wedge X_j)$, $(i,j) \in \mathcal{E}$. We first present pertinent tools that will be used to this end.

        Let $(X_t, Y_t)_{t=1}^n$ be $n \geq 1$ independent and identically distributed (i.i.d.) repetitions of rv $(X,Y)$ with (unknown) pmf $P_{XY}$ of \emph{assumed full support} on $\mathcal{X} \times \mathcal{Y}$, where $\mathcal{X}$ and $\mathcal{Y}$ are finite sets. For $(\mathbf{x}, \mathbf{y})$ in $\mathcal{X}^n \times \mathcal{Y}^n$, let $Q_{\mathbf{x}\mathbf{y}}^{(n)}$ represent its joint type on $\mathcal{X} \times \mathcal{Y}$ (cf. \cite[Ch. 2, 3]{csiszar2011information}). Also, let $Q_\mathbf{x}^{(n)}$ (resp. $Q_\mathbf{y}^{(n)}$) represent the (marginal) type of $\mathbf{x}$ (resp. $\mathbf{y}$).

        A well-known estimator for $\II(X \wedge Y) = \II_{P_{XY}}(X \wedge Y)$ on the basis of $(\mathbf{x}, \mathbf{y})$ in $\mathcal{X}^n \times \mathcal{Y}^n$ is the \emph{empirical mutual information} (EMI) estimator ${\II}_{\mathsf{EMI}}^{(n)}$, based on EMI \cite{goppa-mi}, \cite[Ch. 3]{csiszar2011information},  defined by 
        \begin{align}
            {\II}_{\mathsf{EMI}}^{(n)}(\mathbf{x} \wedge \mathbf{y}) = \HH(Q^{(n)}_{\mathbf{x}}) + \HH(Q^{(n)}_{\mathbf{y}}) - \HH(Q^{(n)}_{\mathbf{x}\mathbf{y}}). \label{eq:1000}
        \end{align}
        Throughout this section, $(\mathbf{X},\mathbf{Y})$ will represent $n$ i.i.d. repetitions of the rv $(X,Y)$. 
        \begin{lemma}[Bias of EMI estimator]\label{lem:mut-inf-bias}
            The bias 
            \begin{align*}
                \bias({\II}_{\mathsf{EMI}}^{(n)}(\mathbf{X} \wedge \mathbf{Y})) \triangleq \mathbb{E}_{P_{XY}}\bs{{\II}_{\mathsf{EMI}}^{(n)}(\mathbf{X} \wedge \mathbf{Y})} - \II(X \wedge Y)
            \end{align*}
            satisfies
            \begin{align*}
                &- \log \br{1 + \frac{\abs{\mathcal{X}}-1}{n}}\br{1 + \frac{\abs{\mathcal{Y}}-1}{n}} \\
                &\qquad\qquad\leq \bias({\II}_{\mathsf{EMI}}^{(n)}(\mathbf{X} \wedge \mathbf{Y})) \leq \log \br{1 + \frac{\abs{\mathcal{X}}\abs{\mathcal{Y}}-1}{n}}.
            \end{align*}
        \end{lemma}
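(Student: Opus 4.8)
The plan is to reduce the bias of the mutual information estimator to the bias of the plug-in entropy estimator for a single distribution, and then to control the latter by identifying it with an expected relative entropy. First I would write ${\II}_{\mathsf{EMI}}^{(n)}$ as a difference of three plug-in entropy estimators and correspondingly decompose the bias. For a pmf $P$ on a finite alphabet of size $k$, with empirical type $Q^{(n)}$ formed from $n$ i.i.d.\ samples, define the \emph{entropy deficit} $\delta(P) = \HH(P) - \E{\HH(Q^{(n)})}$. Since the marginal samples $\mathbf{X}$ (resp.\ $\mathbf{Y}$) are $n$ i.i.d.\ draws from $P_X$ (resp.\ $P_Y$), with types $Q_\mathbf{x}^{(n)}$, $Q_\mathbf{y}^{(n)}$, applying this notion to $P_X$ on $\mathcal{X}$, to $P_Y$ on $\mathcal{Y}$, and to $P_{XY}$ on $\mathcal{X}\times\mathcal{Y}$, linearity of expectation in \Cref{eq:1000} gives
\begin{align*}
\bias\br{{\II}_{\mathsf{EMI}}^{(n)}(\mathbf{X} \wedge \mathbf{Y})} = \delta(P_{XY}) - \delta(P_X) - \delta(P_Y).
\end{align*}

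The crux is then the single two-sided bound $0 \le \delta(P) \le \log\br{1 + \frac{k-1}{n}}$, where $k$ is the alphabet size. Granting this, the theorem follows immediately: for the upper bound I drop the nonnegative terms $\delta(P_X), \delta(P_Y)$ and invoke the bound on $\delta(P_{XY})$ with $k = \abs{\mathcal{X}}\abs{\mathcal{Y}}$; for the lower bound I drop the nonnegative $\delta(P_{XY})$ and bound $-\delta(P_X)-\delta(P_Y)$ from below, the product inside the logarithm arising from $-\log a - \log b = -\log(ab)$ with $a = 1 + \frac{\abs{\mathcal{X}}-1}{n}$ and $b = 1 + \frac{\abs{\mathcal{Y}}-1}{n}$.

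To establish the deficit bound, I would first prove the identity $\delta(P) = \E{\D{Q^{(n)}}{P}}$: the cross-entropy $-\sum_x Q^{(n)}(x)\log P(x)$ has expectation $\HH(P)$ because $\E{Q^{(n)}(x)} = P(x)$, and subtracting $\E{\HH(Q^{(n)})}$ leaves precisely the expected relative entropy. Nonnegativity of $\delta(P)$ is then immediate from nonnegativity of $\D{\cdot}{\cdot}$ (equivalently, concavity of $\HH$ with Jensen). The main obstacle is the upper bound on $\E{\D{Q^{(n)}}{P}}$, which I would attack by using concavity of the logarithm twice. For each realization, viewing $Q^{(n)}(x)/P(x)$ as a random variable under the probability weights $Q^{(n)}(x)$ and applying Jensen gives the deterministic bound $\D{Q^{(n)}}{P} \le \log\br{\sum_x Q^{(n)}(x)^2/P(x)}$; a second application of Jensen across the sample randomness then moves the expectation inside the logarithm. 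It remains to evaluate $\E{\sum_x Q^{(n)}(x)^2/P(x)}$, which, writing $N_x = nQ^{(n)}(x)$ for the binomial counts and using $\E{N_x^2} = nP(x)(1-P(x)) + n^2 P(x)^2$, reduces term by term to $\frac{1-P(x)}{n} + P(x)$ and sums to $1 + \frac{k-1}{n}$. This is the step where the exact constant $k-1$ is produced, so I expect the moment computation and the two-stage Jensen argument to carry the real weight of the proof.
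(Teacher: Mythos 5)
Your proposal is correct and follows essentially the same route as the paper: the paper's proof simply invokes Paninski's Proposition~1, which is precisely the two-sided bound $0 \le \HH(P) - \E{\HH(Q^{(n)})} \le \log\br{1 + \frac{k-1}{n}}$ on the plug-in entropy bias that you re-derive (via the identity with $\E{\D{Q^{(n)}}{P}}$ and the two-stage Jensen/moment computation), combined with the same three-term decomposition of the mutual-information bias. The only difference is that you supply the proof of that proposition rather than citing it.
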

        \begin{proof}
            The proof follows immediately from \cite[Proposition 1]{paninski}.
        \end{proof}

        A concentration bound for the estimator ${\II}_{\mathsf{EMI}}^{(n)}$ in \Cref{eq:1000} using techniques from \cite{kontoyiannis}, is given by
        \begin{lemma}\label{lem:emp-mi-conc}
            Given $\epsilon > 0$ and for every $n \geq 1$, 
            \begin{align*}
                &P_{XY} \br{{\II}_{\mathsf{EMI}}^{(n)}(\mathbf{X} \wedge \mathbf{Y}) - \mathbb{E}_{P_{XY}}\bs{{\II}_{\mathsf{EMI}}^{(n)}(\mathbf{X} \wedge \mathbf{Y})} \geq \epsilon} \\
                &\qquad\qquad\qquad\qquad\qquad\qquad\qquad\leq \exp \br{- \frac{2n \epsilon^2}{36\log^2 n}}.
            \end{align*}
            The same bound applies upon replacing ${\II}_{\mathsf{EMI}}^{(n)}(\mathbf{X} \wedge \mathbf{Y})$ by $-{\II}_{\mathsf{EMI}}^{(n)}(\mathbf{X} \wedge \mathbf{Y})$ above.
        \end{lemma}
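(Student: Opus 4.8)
The plan is to view $\II_{\mathsf{EMI}}^{(n)}(\mathbf{X}\wedge\mathbf{Y})$ as a deterministic function $f\bigl((X_1,Y_1),\ldots,(X_n,Y_n)\bigr)$ of the $n$ i.i.d.\ sample pairs and to apply McDiarmid's bounded-differences inequality, which is the concentration engine underlying the techniques of \cite{kontoyiannis}. Under this viewpoint $f$ is assembled from the three plug-in (type) entropies $\HH(Q^{(n)}_{\mathbf{x}})$, $\HH(Q^{(n)}_{\mathbf{y}})$ and $\HH(Q^{(n)}_{\mathbf{x}\mathbf{y}})$ appearing in \Cref{eq:1000}. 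Since McDiarmid's bound depends only on how much $f$ can move when a single coordinate $(X_t,Y_t)$ is changed, it suffices to control the per-sample sensitivity of each of these three entropies and then combine them.

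The core estimate is a bounded-difference bound for one plug-in entropy. Writing $\phi(p)=-p\log p$ and letting $N_a$ be the count of symbol $a$, we have $\HH(Q^{(n)}_{\mathbf{x}})=\sum_a \phi(N_a/n)$, and altering one sample sends some count $N_a\mapsto N_a-1$ and another $N_b\mapsto N_b+1$, so exactly two summands change. I would bound each single-summand change $\bigl|\phi(k/n)-\phi((k\pm1)/n)\bigr|$ by $\tfrac{\log n}{n}$: the boundary case $k=1$ gives $\phi(1/n)-\phi(0)=\tfrac{\log n}{n}$ exactly, while for $k\ge 2$ the mean value theorem together with $|\phi'(\xi)|=|\log(e\xi)|\le\log n$ on $\xi\ge 1/n$ yields the same bound. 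Summing the two affected terms shows each plug-in entropy has bounded-difference constant at most $\tfrac{2\log n}{n}$, and since $f$ is a $\pm$ combination of the three entropies, its bounded-difference constant is $c=\tfrac{6\log n}{n}$.

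With $c_t=c=\tfrac{6\log n}{n}$ for every $t$, so that $\sum_{t=1}^n c_t^2=\tfrac{36\log^2 n}{n}$, McDiarmid's inequality gives
\[
P\bigl(f-\E{f}\ge\epsilon\bigr)\le\exp\!\left(-\frac{2\epsilon^2}{\sum_{t=1}^n c_t^2}\right)=\exp\!\left(-\frac{2\epsilon^2}{n\,(6\log n/n)^2}\right)=\exp\!\left(-\frac{2n\epsilon^2}{36\log^2 n}\right),
\]
which is exactly the claimed bound. The lower-tail assertion follows immediately by applying the same inequality to $-f$, which satisfies the identical bounded-difference condition with the same constant $c$.

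The step I expect to be the main obstacle is the single-summand entropy estimate, specifically the non-Lipschitz behaviour of $\phi(p)=-p\log p$ near $p=0$, where $\phi'$ diverges. This forces a separate treatment of the boundary case in which a count drops to (or rises from) zero; it is precisely that case that pins down the $\log n$ factor and hence the $36\log^2 n$ appearing in the exponent. The remaining steps — adding the three entropy contributions to get $c=\tfrac{6\log n}{n}$ and invoking McDiarmid — are routine.
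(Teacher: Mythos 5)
Your proof is correct and follows essentially the same route as the paper: establish the bounded-differences constant $6\log n/n$ for $\II_{\mathsf{EMI}}^{(n)}$ by noting that a single sample change alters two counts in each of the three types, bound each resulting entropy increment by $\log n/n$, and invoke McDiarmid. The only difference is that you re-derive the elementary increment inequality $\abs{\tfrac{j+1}{n}\log\tfrac{j+1}{n}-\tfrac{j}{n}\log\tfrac{j}{n}}\le\tfrac{\log n}{n}$ via the mean value theorem (where your derivative bound $\abs{\phi'(\xi)}\le\log n$ is slightly loose for $n=2$, though the discrete inequality itself still holds), whereas the paper simply cites it from Antos--Kontoyiannis.
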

        \begin{proof}
            The empirical mutual information ${\II}_{\mathsf{EMI}}^{(n)}: \mathcal{X}^n \times \mathcal{Y}^n \rightarrow \mathbb{R}^+ \cup \bc{0}$ satisfies the bounded differences property, namely
            \begin{align}
                &\max_{\substack{(\mathbf{x}, \mathbf{y}) \in \mathcal{X}^n \times \mathcal{Y}^n \\ (x_i', y_i') \in \mathcal{X} \times \mathcal{Y}}} \bigg|{\II}_{\mathsf{EMI}}^{(n)}(\mathbf{x} \wedge \mathbf{y}) \nonumber\\
                &\qquad- {\II}_{\mathsf{EMI}}^{(n)}((x_1^{i-1}, x_i', x_{i+1}^n) \wedge (y_1^{i-1}, y_i', y_{i+1}^n))\bigg| \leq \frac{6\log n}{n} \label{eq:emp-mi-bd}
            \end{align}
            for $1 \leq i \leq n$, where for $l < k$, $x_l^k = (x_l, x_{l+1}, \ldots, x_k)$. 
            
            To see this, we note that changing 
            \begin{align*}
                &(\mathbf{x}, \mathbf{y}) = ((x_1, \ldots, x_n), (y_1, \ldots, y_n)) \\
                &\qquad\qquad\qquad\rightarrow ((x_1^{i-1}, x_i', x_{i+1}^n), (y_1^{i-1}, y_i', y_{i+1}^n))
            \end{align*}    
            amounts to changing at most two components in the joint type $Q_{\mathbf{x}\mathbf{y}}^{(n)}$ and marginal types $Q_\mathbf{x}^{(n)}$ and $Q_\mathbf{y}^{(n)}$; in each of these three cases, the probability of one symbol or one pair of symbols decreases by $1/n$ and that of another increases by $1/n$. The difference between the corresponding empirical entropies is given in each case by the sum of two terms. For instance, one such term for the joint empirical entropy is given by 
            \begin{align*}
                &\bigg|\ Q^{(n)}_{\mathbf{x}\mathbf{y}}(x_i, y_i) \log Q^{(n)}_{\mathbf{x}\mathbf{y}}(x_i, y_i)\nonumber \\
                &\qquad\qquad- \br{Q^{(n)}_{\mathbf{x}\mathbf{y}}(x_i, y_i) - \frac{1}{n}} \log \br{Q^{(n)}_{\mathbf{x}\mathbf{y}}(x_i, y_i) - \frac{1}{n}}\bigg|.
            \end{align*}
            Each of these terms is $\leq \log n/n$, using the inequality \cite{kontoyiannis}
            \begin{align*}
                \abs{\frac{j+1}{n} \log \frac{j+1}{n} - \frac{j}{n} \log \frac{j}{n}} \leq \frac{\log n}{n}, \qquad 0 \leq j < n.
            \end{align*}
            The bound in \Cref{eq:emp-mi-bd} is obtained upon applying the triangle inequality twice in each of the three mentioned cases. The claim of the lemma then follows by a standard application of McDiarmid's Bounded Differences Inequality \cite[Theorem 2.9.1]{vershynin_2018}.
        \end{proof}
        Since we seek to identify the edge with the smallest mutual information across it, we next present a technical lemma that bounds above the probability that the estimates of the mutual information between two pairs of rvs are in the wrong order. Our proof uses \Cref{lem:emp-mi-conc}. Let $(X,Y)$ and $(X',Y')$ be two pairs of rvs with pmfs $P_{XY}$ and $P_{X'Y'}$, respectively, on the (common) alphabet $\mathcal{X} \times \mathcal{Y}$, such that $\II(X \wedge Y) < \II(X' \wedge Y')$. Let 
        \begin{align}
            \Delta = \II(X' \wedge Y') - \II(X \wedge Y) > 0. \label{eq:delta-def}
        \end{align}
        By \Cref{lem:mut-inf-bias}, ${\II}_{\mathsf{EMI}}^{(n)}$ is asymptotically unbiased and, in particular, we can make $\bias(\II_\mathsf{EMI}^{(n)}(\mathbf{X} \wedge \mathbf{Y})), \bias(\II_\mathsf{EMI}^{(n)}(\mathbf{X}' \wedge \mathbf{Y}')) < \Delta/2$ by choosing $n$ large enough, for instance, 
        \begin{align}
            n > \max\bc{\frac{\abs{\mathcal{X}}^2 - 1}{2^{\Delta/2} - 1}, \frac{\abs{\mathcal{X}} - 1}{2^{\Delta/4} - 1}}. \label{eq:lower-bound-on-n}
        \end{align}
        The upper bound on the probability of ordering error depends on the bias of ${\II}_{\mathsf{EMI}}^{(n)}$ and decreases with decreasing bias.
        \begin{lemma}\label{lem:estim}
            With $(X,Y)$, $(X',Y')$ and $n$ as in \Cref{eq:lower-bound-on-n},
            \begin{align*}
                &P\br{\II^{(n)}_{\mathsf{EMI}}(\mathbf{X} \wedge \mathbf{Y}) \geq \II^{(n)}_{\mathsf{EMI}} (\mathbf{X}' \wedge \mathbf{Y}')} \nonumber \\
                &\leq 2 \max\left\{\exp \br{- \frac{2n \br{{\Delta}/{2} - \bias\br{\II^{(n)}_{\mathsf{EMI}} (\mathbf{X} \wedge \mathbf{Y})}}^2}{36 \log^2 n}}, \right.\nonumber\\
                &\qquad\qquad\left. \exp \br{- \frac{2n \br{{\Delta}/{2} - \bias\br{\II^{(n)}_{\mathsf{EMI}} (\mathbf{X}' \wedge \mathbf{Y}')}}^2}{36 \log^2 n}}\right\}.
            \end{align*}
        \end{lemma}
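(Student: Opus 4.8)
The plan is to reduce the ordering-error event to a pair of one-sided large-deviation events through a single threshold placed at the midpoint of the two true mutual informations, and then to convert each such event into a deviation of the relevant EMI estimator about its own mean so that \Cref{lem:emp-mi-conc} applies. Write $\hat{I} = \II^{(n)}_{\mathsf{EMI}}(\mathbf{X} \wedge \mathbf{Y})$ and $\hat{I}' = \II^{(n)}_{\mathsf{EMI}}(\mathbf{X}' \wedge \mathbf{Y}')$, and set the midpoint $m = \II(X \wedge Y) + \Delta/2 = \II(X' \wedge Y') - \Delta/2$, where the second equality uses the definition of $\Delta$ in \Cref{eq:delta-def}. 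The elementary observation is that if $\hat{I} < m$ and $\hat{I}' > m$ then $\hat{I} < \hat{I}'$; hence the error event $\{\hat{I} \geq \hat{I}'\}$ is contained in the union $\bc{\hat{I} \geq m} \cup \bc{\hat{I}' \leq m}$, and a union bound applies.

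First I would handle $\bc{\hat{I} \geq m}$. Adding and subtracting $\mathbb{E}[\hat{I}] = \II(X \wedge Y) + \bias(\hat{I})$ rewrites this event as $\bc{\hat{I} - \mathbb{E}[\hat{I}] \geq \Delta/2 - \bias(\hat{I})}$, i.e.\ an upward deviation of $\hat{I}$ from its mean by $\Delta/2 - \bias(\hat{I})$. Symmetrically, $\bc{\hat{I}' \leq m}$ becomes a downward deviation of $\hat{I}'$ from $\mathbb{E}[\hat{I}'] = \II(X' \wedge Y') + \bias(\hat{I}')$. The choice of $n$ in \Cref{eq:lower-bound-on-n}, together with the bias bounds of \Cref{lem:mut-inf-bias}, guarantees $\abs{\bias(\hat{I})}, \abs{\bias(\hat{I}')} < \Delta/2$, so the effective deviation thresholds are strictly positive. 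Applying \Cref{lem:emp-mi-conc} in its stated form to the first event and in its $-\II^{(n)}_{\mathsf{EMI}}$ form to the second yields two sub-Gaussian-type exponentials whose exponents are controlled by $\Delta/2$ adjusted by the respective biases. Finally I would bound the sum of the two tail probabilities using $a + b \leq 2\max\bc{a,b}$, collapsing the estimate into the $2\max\bc{\cdot,\cdot}$ form asserted in the lemma.

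The only genuinely non-routine point, and the one I expect to be the main obstacle, is that \Cref{lem:emp-mi-conc} concentrates $\II^{(n)}_{\mathsf{EMI}}$ about its \emph{expectation}, whereas the quantities we must actually separate are the two \emph{true} mutual informations. The bias therefore cannot be discarded: it shifts each threshold, and the two shifts act in opposite directions, since the bias of the smaller-mutual-information estimator enlarges its upward tail while the bias of the larger-mutual-information estimator tends to shrink its downward tail. Keeping both effective exponents positive is precisely what forces the sample-size requirement \Cref{eq:lower-bound-on-n}, and this is the step in which the bias bound of \Cref{lem:mut-inf-bias} is indispensable; the symmetric bound displayed in the statement then follows as a (mild, and in the regime of non-negative EMI bias fully justified) over-estimate of the asymmetric exponents produced by the argument above.
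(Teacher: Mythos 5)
Your proposal is correct and follows essentially the same route as the paper: your midpoint split is exactly the paper's union bound into $P\br{\II^{(n)}_{\mathsf{EMI}}(\mathbf{X}\wedge\mathbf{Y}) - \II(X\wedge Y) \geq \Delta/2}$ and $P\br{\II^{(n)}_{\mathsf{EMI}}(\mathbf{X}'\wedge\mathbf{Y}') - \II(X'\wedge Y') \leq -\Delta/2}$, followed by re-centering each estimator about its expectation to absorb the bias and applying \Cref{lem:emp-mi-conc} (to $+\II^{(n)}_{\mathsf{EMI}}$ and $-\II^{(n)}_{\mathsf{EMI}}$ respectively), then $a+b \leq 2\max\bc{a,b}$. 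Your closing remark that the second deviation threshold naturally comes out as $\Delta/2 + \bias\br{\II^{(n)}_{\mathsf{EMI}}(\mathbf{X}'\wedge\mathbf{Y}')}$, so that writing it as $\Delta/2 - \bias$ is an over-estimate justified only when that bias is non-negative (or by using the magnitude bound of \Cref{lem:mut-inf-bias}), is a fair observation that the paper's own derivation of its inequality \Cref{eq:5.6} passes over silently.
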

        \begin{proof}
            Recalling \Cref{eq:delta-def}, we have 
            \begin{align}
                &P\br{\II^{(n)}_{\mathsf{EMI}} (\mathbf{X} \wedge \mathbf{Y}) \geq \II^{(n)}_{\mathsf{EMI}} (\mathbf{X}' \wedge \mathbf{Y}')} \nonumber\\
                &= P\left(\II^{(n)}_{\mathsf{EMI}} (\mathbf{X} \wedge \mathbf{Y}) \right.\nonumber\\
                &\left.\qquad\qquad- \II(X \wedge Y) - \II^{(n)}_{\mathsf{EMI}} (\mathbf{X}' \wedge \mathbf{Y}') + \II(X' \wedge Y') \geq \Delta\right)\nonumber\\
                &\leq P\br{\II^{(n)}_{\mathsf{EMI}} (\mathbf{X} \wedge \mathbf{Y}) - \II(X \wedge Y) \geq \Delta/2} \nonumber\\
                &\qquad\quad+ P\br{\II^{(n)}_{\mathsf{EMI}} (\mathbf{X}' \wedge \mathbf{Y}') - \II(X' \wedge Y') \leq - \Delta/2} \label{eq:estim-order-claim}
            \end{align}
            Using \Cref{lem:emp-mi-conc}, and in view of \Cref{eq:delta-def}, \Cref{eq:lower-bound-on-n},
            \begin{align}
                &P\br{\II^{(n)}_{\mathsf{EMI}} (\mathbf{X} \wedge \mathbf{Y}) - \II(X \wedge Y) \geq \Delta/2}\nonumber\\ 
                &= P\Big(\II^{(n)}_{\mathsf{EMI}} (\mathbf{X} \wedge \mathbf{Y}) -\mathbb{E}\bs{\II^{(n)}_{\mathsf{EMI}} (\mathbf{X} \wedge \mathbf{Y})} \nonumber \\
                &\qquad\qquad\qquad\qquad\qquad\geq \Delta/2 - \bias\br{\II^{(n)}_{\mathsf{EMI}} (\mathbf{X} \wedge \mathbf{Y})}\Big) \nonumber\\
                &\leq \exp \br{- \frac{2n \br{{\Delta}/{2} - \bias\br{\II^{(n)}_{\mathsf{EMI}} (\mathbf{X} \wedge \mathbf{Y})}}^2}{36 \log^2 n}}, \label{eq:5.5}
            \end{align}
            and similarly,
            \begin{align}
                &P\br{\II(X' \wedge Y') - \II^{(n)}_{\mathsf{EMI}} (\mathbf{X}' \wedge \mathbf{Y}') \geq \Delta/2} \nonumber \\
                &\leq \exp \br{- \frac{2n \br{{\Delta}/{2} - \bias\br{\II^{(n)}_{\mathsf{EMI}} (\mathbf{X}' \wedge \mathbf{Y}')}}^2}{36 \log^2 n}}. \label{eq:5.6}
            \end{align}
            The claimed bound follows by using \Cref{eq:5.5,eq:5.6} in \Cref{eq:estim-order-claim}.
        \end{proof}
    \subsection{Bandit algorithm for estimating $\SI(X_\mathcal{M})$}
        The following bandit-based method identifies the best arm pair corresponding to the edge of the MCT across which mutual information is minimal. For an introduction to the fundamentals of bandit algorithms, see \cite{szepesvari-bandits}.   

        In the parlance of banditry, the environment has $m$ arms, one arm corresponding to each vertex in $\mathcal{G} = (\mathcal{M}, \mathcal{E})$. The agent can pull, in any step, two arms that are connected by an edge in $\mathcal{E}$. Each action of the agent is specified by the pair $(i,j)$, $1 \leq i < j \leq m$, $(i,j) \in \mathcal{E}$, with associated reward being the realizations $(X_i = x_i, X_j = x_j)$. The agent is allowed to pull a total of $N$ pairs of arms, say, using \emph{uniform sampling}, where $N$ will be specified below. A pulling of a pair of arms can be viewed also as pulling the corresponding connecting edge, thereby rendering it a traditional stochastic bandit problem. We resort to a uniform sampling strategy for the sake of simplicity.

        \begin{definition}[Uniform sampling]\label{def:unif-samp}
            In uniform sampling, pairs of rvs corresponding to edges of the tree are sampled equally often. Specifically, each pair of rvs $(X_i, X_j)$, $(i,j) \in \mathcal{E}$, is sampled $n$ times over nonoverlapping time instants. Hence, an agent pulls a total of $N$ pairs of arms, where $N = \abs{\mathcal{E}} n$.
        \end{definition}
        
        By means of these actions, the agent seeks to form estimates of all two-dimensional marginal pmfs $P_{X_i X_j}$ and of the corresponding $\II(X_i \wedge X_j)$ for $(i,j)$ as above, and subsequently identify $(\bar{i}, \bar{j}) \in \mathcal{E}$ (see \Cref{rem:min-edge}). Let $X_{\mathcal{M}}^N$ denote $N$ i.i.d. repetitions of $X_\mathcal{M} = (X_1, \ldots, X_m)$. Specifically, the agent must produce an estimate $\hat{e}_N = \hat{e}_N (X_\mathcal{M}^N) \in \mathcal{E}$ of $(\bar{i}, \bar{j}) \in \mathcal{E}$ at the conclusion of $N$ steps so as to minimize the error probability $P(\hat{e}_N \neq (\bar{i}, \bar{j}))$. The following notation is used. Write 
        \begin{align*}
            \II(i \wedge j) = \II(X_i \wedge X_j), \quad (i,j) \in \mathcal{E}
        \end{align*}
        for simplicity, and let 
        \begin{align*}
            {\II}_{\mathsf{EMI}}^{(n)}(i \wedge j) \triangleq {\II}_{\mathsf{EMI}}^{(n)}(\mathbf{X}_i \wedge \mathbf{X}_j)
        \end{align*}
        be the estimate of $\II(i \wedge j)$. At the end of $N = \abs{\mathcal{E}} n$ steps, set 
        \begin{align}
            \hat{e}(X_\mathcal{M}^N) = \arg \min_{(i,j) \in \mathcal{E}} {\II}_{\mathsf{EMI}}^{(n)} (i,j)  = (i^*, j^*), \text{ say} \label{eq:edge-estimate}
        \end{align}
        with ties being resolved arbitrarily. Correspondingly, the estimate of shared information is 
        \begin{align}
            \SI_{\mathsf{EMI}}^{(N)}(X_\mathcal{M}^N) \triangleq {\II}_{\mathsf{EMI}}^{(n)}(i^* \wedge j^*). \label{eq:si-estimate}
        \end{align}
        Denote 
        \begin{align*}
            \Delta_{ij} = \II(X_i \wedge X_j) - \II(X_{\bar{i}}, X_{\bar{j}}), \qquad (i,j) \in \mathcal{E}
        \end{align*}
        and 
        \begin{align*}
            \Delta_1 = \min_{\substack{(i,j) \in \mathcal{E} \\ (i,j) \neq (\bar{i}, \bar{j})}} \II(X_i \wedge X_j) - \II(X_{\bar{i}} \wedge X_{\bar{j}}),
        \end{align*}
        where the latter is the difference between the second-lowest and lowest mutual information across edges in $\mathcal{E}$. Note that $\Delta_1 > 0$ by the assumed uniqueness of the minimizing edge $(\bar{i}, \bar{j})$.

        The shared information estimate $\SI_{\mathsf{EMI}}^{(N)}(X_\mathcal{M}^N)$ converges almost surely and in the mean. This is shown in \Cref{th:si-conv} below. To that end, we first provide an upper bound for the probability of arm misidentification with uniform sampling. 

        \begin{proposition}\label{th:estimation-result}
            For uniform sampling, the probability of error in identifying the optimal pair of arms is 
            \begin{align*}
                P \br{\hat{e}_N(X_\mathcal{M}^N) \neq (\bar{i}, \bar{j})} \leq 2 \abs{\mathcal{E}} \exp\br{\frac{-(N/\abs{\mathcal{E}}) \Delta_1^2}{648 \log^2(N/\abs{\mathcal{E}})}}
            \end{align*}
            for all
            \begin{align}
                N > \abs{\mathcal{E}} \max\bc{\frac{\abs{\mathcal{X}}^2 - 1}{2^{\Delta_1/3} - 1}, \frac{\abs{\mathcal{X}} - 1}{2^{\Delta_1/6} - 1}}. \label{eq:prop-estim-res}
            \end{align}
        \end{proposition}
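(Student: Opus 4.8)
The plan is to reduce the event of arm-pair misidentification to a union of pairwise ordering errors, each controlled by \Cref{lem:estim}, and then to make the resulting bound uniform over edges using the bias estimates of \Cref{lem:mut-inf-bias}.

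First I would decompose the error event. Since $\hat{e}_N$ in \Cref{eq:edge-estimate} is the minimizer of the empirical mutual information, a misidentification $\hat{e}_N \neq (\bar{i},\bar{j})$ forces $\II^{(n)}_{\mathsf{EMI}}(i \wedge j) \leq \II^{(n)}_{\mathsf{EMI}}(\bar{i} \wedge \bar{j})$ for the selected edge $(i,j) \neq (\bar{i},\bar{j})$ (with equality admissible under arbitrary tie-breaking). Hence
\[
\bc{\hat{e}_N \neq (\bar{i},\bar{j})} \subseteq \bigcup_{\substack{(i,j) \in \mathcal{E}\\ (i,j) \neq (\bar{i},\bar{j})}} \bc{\II^{(n)}_{\mathsf{EMI}}(\bar{i} \wedge \bar{j}) \geq \II^{(n)}_{\mathsf{EMI}}(i \wedge j)},
\]
and a union bound over the $\abs{\mathcal{E}} - 1$ competing edges reduces the task to bounding each ordering-error probability. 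The nonoverlapping time instants in \Cref{def:unif-samp} render each such term a genuine two-sample ordering error, although the union bound itself needs no independence.

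Next I would invoke \Cref{lem:estim} on each term with $(\mathbf{X},\mathbf{Y}) = (\mathbf{X}_{\bar{i}}, \mathbf{X}_{\bar{j}})$ (the smaller-MI pair) and $(\mathbf{X}',\mathbf{Y}') = (\mathbf{X}_i, \mathbf{X}_j)$, so that the gap is $\Delta_{ij} = \II(X_i \wedge X_j) - \II(X_{\bar{i}} \wedge X_{\bar{j}}) \geq \Delta_1 > 0$. This yields, per edge, a maximum of two McDiarmid-type exponentials whose exponents carry the biases of the two EMI estimators. The crux is to lower-bound the deviation appearing in each exponential by $\Delta_1/6$, uniformly in $(i,j)$. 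For the upper tail of the small-MI pair the relevant quantity is $\Delta_{ij}/2 - \bias(\cdot) \geq \Delta_1/2 - \bias(\cdot)$, and the first term of \Cref{eq:prop-estim-res} is precisely the requirement $\log\br{1 + (\abs{\mathcal{X}}^2 - 1)/n} \leq \Delta_1/3$ coming from the positive-bias bound of \Cref{lem:mut-inf-bias}, giving $\Delta_1/2 - \Delta_1/3 = \Delta_1/6$. For the lower tail of the large-MI pair the negative part of the bias enters, bounded in magnitude by $2\log\br{1 + (\abs{\mathcal{X}} - 1)/n}$; forcing this to be at most $\Delta_1/3$ produces the second term of \Cref{eq:prop-estim-res}. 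With both conditions in force, every exponent is at least $2n(\Delta_1/6)^2/(36\log^2 n) = n\Delta_1^2/(648\log^2 n)$.

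Finally I would reassemble: each of the fewer than $\abs{\mathcal{E}}$ terms contributes at most $2\exp\br{-n\Delta_1^2/(648\log^2 n)}$, and substituting $n = N/\abs{\mathcal{E}}$ from \Cref{def:unif-samp} gives the claimed bound. The main obstacle I anticipate is the bias bookkeeping of the preceding paragraph: one must keep the positive and negative bias bounds of \Cref{lem:mut-inf-bias} separate — the $\abs{\mathcal{X}}^2$ term governs overestimation of the weakest edge while the $\abs{\mathcal{X}}$ term governs underestimation of its competitors — since this asymmetry is exactly what produces the two distinct thresholds in \Cref{eq:prop-estim-res}. The constant $648 = 36 \cdot 6^2/2$ then traces back to squaring the $\Delta_1/6$ floor inside the $36\log^2 n$ denominator of \Cref{lem:emp-mi-conc}.
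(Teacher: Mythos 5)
Your proposal is correct and follows essentially the same route as the paper's proof: a union bound over the $\abs{\mathcal{E}}-1$ competing edges, \Cref{lem:estim} applied to each ordering-error term, and the sample-size condition \Cref{eq:prop-estim-res} used via \Cref{lem:mut-inf-bias} to force every bias below $\Delta_1/3$ so that each exponent is at least $n\Delta_{ij}^2/(648\log^2 n) \geq n\Delta_1^2/(648\log^2 n)$ with $n = N/\abs{\mathcal{E}}$. Your bias bookkeeping, including the origin of the two distinct thresholds and of the constant $648$, matches the paper's accounting.
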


        \begin{proof}
            With $N = \abs{\mathcal{E}}n$, let $x_\mathcal{M}^N$ represent a realization of $X_\mathcal{M}^N$. For each $(i,j) \in \mathcal{E}$, the agent computes the empirical mutual information estimate $\II_{\mathsf{EMI}}^{(n)} (\mathbf{x}_i \wedge \mathbf{x}_j)$ of $\II(X_i \wedge X_j)$. Note that the sampling of arm pairs occurs over nonoverlapping time instants. By \Cref{lem:mut-inf-bias} and \Cref{eq:lower-bound-on-n}
            \begin{align*}
                \abs{\bias(\II_{\mathsf{EMI}}^{(n)} (i \wedge j))} &\leq \frac{\Delta_1}{3} \leq \frac{\Delta_{ij}}{3} < \frac{\Delta_{ij}}{2} \quad \text{for $(i,j) \neq (\bar{i}, \bar{j})$},
            \end{align*}
            for all $N$ as in \Cref{eq:prop-estim-res}. Then, we have
            \begin{align}
                &P\br{\hat{e}_N (X_\mathcal{M}^N) \neq (\bar{i}, \bar{j})} \nonumber\\
                &= P\br{\II^{(n)}_{\mathsf{EMI}} (\bar{i} \wedge \bar{j}) \geq \II^{(n)}_{\mathsf{EMI}} ({i} \wedge {j}) \text{ for some $(i,j) \neq (\bar{i}, \bar{j})$}} \nonumber \\
                &\leq \sum_{(i,j) \neq (\bar{i}, \bar{j})} P\br{\II^{(n)}_{\mathsf{EMI}} (\bar{i} \wedge \bar{j}) \geq \II^{(n)}_{\mathsf{EMI}} ({i} \wedge {j})} \nonumber\\
                &\leq \sum_{(i,j) \neq (\bar{i}, \bar{j})} 2 \exp\br{\frac{-n\Delta_{ij}^2}{648 \log^2 n}}, \qquad \text{by \Cref{lem:estim}}\nonumber\\
                &\leq 2 \abs{\mathcal{E}} \exp\br{\frac{-(N/\abs{\mathcal{E}}) \Delta_1^2}{648 \log^2(N/\abs{\mathcal{E}})}}. \nonumber \qedhere
            \end{align}
        \end{proof}
        \begin{theorem}\label{th:si-conv}
            For uniform sampling (as in \Cref{def:unif-samp}), the estimate $\SI_{\mathsf{EMI}}^{(N)}(X_\mathcal{M}^N)$ converges as $N \rightarrow \infty$ to $\SI(X_\mathcal{M})$ almost surely and in the mean.
        \end{theorem}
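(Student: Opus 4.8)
The plan is to decouple the randomness in the identified edge $(i^*, j^*)$ of \Cref{eq:edge-estimate} from the fluctuation of the empirical mutual information across a \emph{fixed} edge. Write $n = N/\abs{\mathcal{E}}$ and let $E_N = \bc{(i^*, j^*) = (\bar{i}, \bar{j})}$ be the event that the minimizing edge is correctly identified. The strategy is to show that $E_N$ holds for all large $N$ almost surely, thereby reducing the convergence of $\SI_{\mathsf{EMI}}^{(N)}(X_\mathcal{M}^N)$ to that of the single fixed-edge estimator $\II_{\mathsf{EMI}}^{(n)}(\bar{i} \wedge \bar{j})$, and then to invoke the already-established properties of the latter.

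First I would handle the edge-identification step. \Cref{th:estimation-result} gives $P(E_N^c) \leq 2\abs{\mathcal{E}}\exp\br{-n\Delta_1^2/(648\log^2 n)}$ for all sufficiently large $N$. The crucial point is that this bound is summable over $n$: since $n\Delta_1^2 \geq 1296\log^3 n$ for large $n$, the exponent is at most $-2\log n$, so the right side is $O(n^{-2})$ and $\sum_N P(E_N^c) < \infty$. By the Borel--Cantelli lemma, $E_N^c$ occurs only finitely often with probability one; hence, almost surely, $\SI_{\mathsf{EMI}}^{(N)}(X_\mathcal{M}^N) = \II_{\mathsf{EMI}}^{(n)}(\bar{i} \wedge \bar{j})$ for all large $N$. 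It then remains to show that $\II_{\mathsf{EMI}}^{(n)}(\bar{i} \wedge \bar{j}) \to \II(X_{\bar{i}} \wedge X_{\bar{j}}) = \SI(X_\mathcal{M})$ almost surely. This follows either from the strong law of large numbers---the empirical types converge almost surely to $P_{X_{\bar{i}}}$, $P_{X_{\bar{j}}}$, $P_{X_{\bar{i}} X_{\bar{j}}}$, whence continuity of entropy and the definition \Cref{eq:1000} yield the claim---or from combining \Cref{lem:emp-mi-conc} (whose exponential tail is summable in $n$, so that $\II_{\mathsf{EMI}}^{(n)}(\bar{i} \wedge \bar{j}) - \mathbb{E}[\II_{\mathsf{EMI}}^{(n)}(\bar{i} \wedge \bar{j})] \to 0$ almost surely by Borel--Cantelli) with \Cref{lem:mut-inf-bias} (whose bias bound vanishes as $n \to \infty$).

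For convergence in the mean I would use uniform boundedness: for every edge, $0 \leq \II_{\mathsf{EMI}}^{(n)}(i \wedge j) \leq \log\abs{\mathcal{X}}$, so $\abs{\SI_{\mathsf{EMI}}^{(N)}(X_\mathcal{M}^N) - \SI(X_\mathcal{M})} \leq 2\log\abs{\mathcal{X}}$ deterministically. Almost sure convergence together with this uniform bound gives $\mathbb{E}\abs{\SI_{\mathsf{EMI}}^{(N)}(X_\mathcal{M}^N) - \SI(X_\mathcal{M})} \to 0$ by the bounded convergence theorem. Alternatively, splitting on $E_N$ yields $\mathbb{E}\abs{\SI_{\mathsf{EMI}}^{(N)}(X_\mathcal{M}^N) - \SI(X_\mathcal{M})} \leq \mathbb{E}\abs{\II_{\mathsf{EMI}}^{(n)}(\bar{i} \wedge \bar{j}) - \II(X_{\bar{i}} \wedge X_{\bar{j}})} + 2\log\abs{\mathcal{X}}\, P(E_N^c)$, in which the second term vanishes by \Cref{th:estimation-result} and the first by the mean convergence of the fixed-edge estimator (its bias vanishes by \Cref{lem:mut-inf-bias}, and the mean absolute centered fluctuation vanishes upon integrating the tail bound of \Cref{lem:emp-mi-conc}).

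The step I expect to be the main obstacle is the bookkeeping around the random minimizer $(i^*, j^*)$: one must argue cleanly that the estimator almost surely locks onto the fixed-edge estimator $\II_{\mathsf{EMI}}^{(n)}(\bar{i} \wedge \bar{j})$ after finitely many steps, which rests entirely on the summability of the error bound of \Cref{th:estimation-result}. Once this reduction is in place, transferring both modes of convergence from the fixed-edge estimator is routine given \Cref{lem:emp-mi-conc} and \Cref{lem:mut-inf-bias}.
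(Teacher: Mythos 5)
Your proposal is correct and follows essentially the same route as the paper: split on the event that the minimizing edge is correctly identified, control that event via \Cref{th:estimation-result}, control the fixed-edge estimator via \Cref{lem:emp-mi-conc} together with the vanishing bias of \Cref{lem:mut-inf-bias}, and conclude by Borel--Cantelli for almost sure convergence and dominated (bounded) convergence for convergence in the mean. The only cosmetic difference is that you apply Borel--Cantelli separately to the misidentification events to get an almost-sure ``lock-on,'' whereas the paper applies it once to the combined deviation bound; the substance is identical.
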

        \begin{proof}
            Let $0 < \epsilon < \Delta_1$. By \Cref{eq:lower-bound-on-n}, we can choose $n$ large enough such that $\bias(\II_{\mathsf{EMI}}^{(n)}(\bar{i} \wedge \bar{j})) < \epsilon/2$ (see \Cref{rem:min-edge}). For all such $n$,
            \begin{align}
                &P\br{\abs{\SI_{\mathsf{EMI}}^{(N)}(X_\mathcal{M}) - \SI(X_\mathcal{M})} > \epsilon}\nonumber\\
                &\leq P\br{\abs{\SI_{\mathsf{EMI}}^{(N)}(X_\mathcal{M}^N) - \II(\bar{i} \wedge \bar{j})} > \epsilon, \hat{e}_N(X_\mathcal{M}^N) = (\bar{i}, \bar{j})} \nonumber\\
                &\qquad\qquad\qquad\qquad\qquad\qquad\qquad+ P\br{\hat{e}_N(X_\mathcal{M}^N) \neq (\bar{i} \wedge \bar{j})} \nonumber \\
                &\leq P\br{\abs{\II_{\mathsf{EMI}}^{(n)}(\bar{i} \wedge \bar{j}) - \II(\bar{i} \wedge \bar{j})}  > \epsilon} + P\br{\hat{e}_N(X_\mathcal{M}^N) \neq (\bar{i}, \bar{j})}\nonumber\\
                &= P\bigg(\bigg|\II_{\mathsf{EMI}}^{(n)}(\bar{i} \wedge \bar{j}) - \E{\II_{\mathsf{EMI}}^{(n)}(\bar{i} \wedge \bar{j})} \nonumber\\
                &\qquad\qquad\qquad\qquad + \E{\II_{\mathsf{EMI}}^{(n)}(\bar{i} \wedge \bar{j})} - \II(\bar{i} \wedge \bar{j})\bigg|  > \epsilon\bigg) \nonumber \\
                &\qquad\qquad\qquad\qquad\qquad\qquad\qquad+ P\br{\hat{e}_N(X_\mathcal{M}^N) \neq (\bar{i}, \bar{j})}\nonumber\\
                &\leq P\bigg(\abs{\II_{\mathsf{EMI}}^{(n)}(\bar{i} \wedge \bar{j}) - \E{\II_{\mathsf{EMI}}^{(n)}(\bar{i} \wedge \bar{j})}}\nonumber \\
                &\qquad\qquad\qquad\qquad+ \abs{\bias\br{\II_{\mathsf{EMI}}^{(n)}(\bar{i} \wedge \bar{j})}}  > \epsilon\bigg) \nonumber\\
                &\qquad\qquad\qquad\qquad\qquad\qquad\qquad+ P\br{\hat{e}_N(X_\mathcal{M}^N) \neq (\bar{i}, \bar{j})}\nonumber\\
                &\leq \exp\br{\frac{-2(N/\abs{\mathcal{E}})(\epsilon/2)^2}{36 \log^2 (N/\abs{\mathcal{E}})}} \nonumber\\
                &\qquad\qquad\qquad\qquad+ 2 \abs{\mathcal{E}} \exp\br{\frac{-(N/\abs{\mathcal{E}}) \Delta_1^2}{648 \log^2(N/\abs{\mathcal{E}})}} \label{eq:th-4.2}
            \end{align}
            by \Cref{lem:emp-mi-conc} and \Cref{th:estimation-result}. Almost sure convergence follows from \Cref{eq:th-4.2} and the Borel-Cantelli Lemma (cf. e.g., \cite[Lemma 7.3]{koralov2007theory}); and furthermore, since $\SI_{\mathsf{EMI}}^{(N)}(X_\mathcal{M}^N) \leq \log \abs{\mathcal{X}}$ for all $N$, almost sure convergence implies convergence in the mean by the Dominated Convergence Theorem (cf. e.g., \cite[Theorem 3.27]{koralov2007theory}).
        \end{proof}
        The following corollary specifies the sample complexity of $\SI^{(N)}_{\mathsf{EMI}}$ in terms of a minimum requirement on $N$ for which the estimation error is small with high probability. 
        \begin{corollary}\label{cor:si-est-sample-complexity}
            For $0 < \epsilon < 1/2$ and $\delta < 1/e$, we have 
            \begin{align*}
                P\br{\abs{\SI_{\mathsf{EMI}}^{(N)}(X_\mathcal{M}^N) - \SI(X_\mathcal{M})} > \epsilon} \leq \delta
            \end{align*}
            for sample complexity $N = N(\epsilon, \delta)$ that obeys\footnote{The approximate form of \Cref{eq:sample-complex} considers only the significant terms depending on $\abs{\mathcal{X}}$, $\abs{\mathcal{E}}$, $\Delta_1$, $\epsilon$ and $\delta$.}
            \begin{align}
                N &\gtrsim \abs{\mathcal{E}} \left[\frac{\abs{\mathcal{X}}}{\epsilon} + \frac{1}{\epsilon^2} \ln\br{\frac{1}{\delta}} \log^2 \br{\frac{1}{\epsilon^2} \ln\br{\frac{1}{\delta}}} \right.\nonumber\\
                &\qquad\left. + \frac{1}{\Delta_1^2} \ln\br{\frac{\abs{\mathcal{E}}}{\delta}} \log \br{\frac{1}{\Delta_1^2} \ln\br{\frac{\abs{\mathcal{E}}}{\delta}}} \right.\nonumber\\
                &\qquad\left. + \frac{1}{\Delta_1^2} \ln\br{\frac{\abs{\mathcal{E}}}{\delta}} \log^2 \br{\frac{1}{\Delta_1^2} \ln\br{\frac{\abs{\mathcal{E}}}{\delta}}}\right]. \label{eq:sample-complex}
            \end{align}
        \end{corollary}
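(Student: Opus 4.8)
The plan is to specialize the error bound already established in \Cref{th:si-conv} and then invert it for $N$. Writing $n = N/\abs{\mathcal{E}}$, the bound \Cref{eq:th-4.2} shows that $P(\abs{\SI_{\mathsf{EMI}}^{(N)}(X_\mathcal{M}^N) - \SI(X_\mathcal{M})} > \epsilon)$ is at most the sum of a ``concentration'' term $\exp(-n\epsilon^2/(72\log^2 n))$ and a ``misidentification'' term $2\abs{\mathcal{E}}\exp(-n\Delta_1^2/(648\log^2 n))$. I would force each of these two summands to be at most $\delta/2$, so that their sum is at most the required $\delta$.

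Making the concentration term $\leq \delta/2$ is equivalent to $n/\log^2 n \geq 72\,\epsilon^{-2}\ln(2/\delta)$, and making the misidentification term $\leq \delta/2$ is equivalent to $n/\log^2 n \geq 648\,\Delta_1^{-2}\ln(4\abs{\mathcal{E}}/\delta)$. Each of these has the form $n/\log^2 n \geq c$, a transcendental constraint that is eventually increasing in $n$; a convenient sufficient condition is $n \geq K c\log^2(Kc)$ for a suitable absolute constant $K$, which I would verify using $\log(Kc\log^2(Kc)) \sim \log(Kc)$ so that $\log^2 n \lesssim \log^2(Kc)$ and hence $n/\log^2 n \gtrsim c$. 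Expanding $\log^2(Kc) = \log^2 c + O(\log c)$, this sufficient $n$ splits into a leading $c\log^2 c$ piece and a lower-order $c\log c$ piece. Substituting $c = \Theta(\epsilon^{-2}\ln(1/\delta))$ yields the $\frac{1}{\epsilon^2}\ln(1/\delta)\log^2(\cdot)$ contribution, while $c = \Theta(\Delta_1^{-2}\ln(\abs{\mathcal{E}}/\delta))$ yields both the $\frac{1}{\Delta_1^2}\ln(\abs{\mathcal{E}}/\delta)\log^2(\cdot)$ and the companion $\frac{1}{\Delta_1^2}\ln(\abs{\mathcal{E}}/\delta)\log(\cdot)$ contributions in \Cref{eq:sample-complex}.

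Separately, I would check that the hypotheses under which \Cref{eq:th-4.2} was derived remain in force. The bias-handling step in \Cref{th:si-conv} needs $\bias(\II_{\mathsf{EMI}}^{(n)}(\bar{i} \wedge \bar{j})) < \epsilon/2$, and \Cref{th:estimation-result} needs the lower bounds \Cref{eq:lower-bound-on-n}, \Cref{eq:prop-estim-res} on $n$. By \Cref{lem:mut-inf-bias} the magnitude of the bias is $O(\abs{\mathcal{X}}^2/n)$, so $\bias < \epsilon/2$ is guaranteed once $n = \Omega(\abs{\mathcal{X}}^2/\epsilon)$; under the footnote's convention of retaining only the leading dependence this furnishes the $\abs{\mathcal{X}}/\epsilon$ term. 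The stated restrictions $\epsilon < 1/2$ and $\delta < 1/e$ are precisely what keep $\ln(1/\delta) > 1$ and keep each constant $c$ above the threshold where $c \mapsto Kc\log^2(Kc)$ is a legitimate inverse, so the inversion is valid.

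Finally, taking $n$ to be (a constant times) the maximum of the three lower bounds obtained above and setting $N = \abs{\mathcal{E}}\,n$ gives a single $N$ that simultaneously satisfies all the constraints, which is exactly the sum displayed in \Cref{eq:sample-complex}. I expect the main obstacle to be the transcendental inversion $n/\log^2 n \geq c$: unlike a clean $n \geq c$ bound, it requires controlling the $\log^2 n$ denominator as $n$ itself grows, and it is this step that both generates the $\log$ and $\log^2$ factors and pins down the constants; the bias bookkeeping and the final union over the constraints are routine by comparison.
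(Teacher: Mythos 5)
Your proposal follows essentially the same route as the paper: split the bound \Cref{eq:th-4.2} into the concentration and misidentification terms, add the bias requirement $n \gtrsim \abs{\mathcal{X}}/\epsilon$ from \Cref{lem:mut-inf-bias}, invert each transcendental constraint $n/\log^2 n \geq c$ via a sufficient condition of the form $n \gtrsim c\log c + c\log^2 c$ (the paper packages this as \Cref{lem:tech-1}), and set $N = \abs{\mathcal{E}}\,n$. The proof is correct up to the same constant-level looseness the paper itself tolerates under its footnote.
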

        The proof of the corollary relies on the following technical lemma, which is similar in spirit to \cite[Lemma A.1]{sss-ml-book}.
        \begin{lemma}\label{lem:tech-1}
            It holds that 
            \begin{align*}
                x \geq c \ln^2 x, \quad c \geq 1,\ x \geq \max\bc{1, 4c \ln 2c + 16c \ln^2 c}.
            \end{align*}
        \end{lemma}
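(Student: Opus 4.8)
The plan is to convert this quadratic-in-$\ln$ inequality into an ordinary single-$\ln$ inequality by a square-root substitution, and then to settle the latter with one tangent-line bound for the concave function $\ln$; the only delicate point will be matching the constants in the stated threshold.

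First I would note that, because $x \geq 1$ forces $\ln x \geq 0$, it suffices to prove the square-rooted inequality $\sqrt{x} \geq \sqrt{c}\,\ln x$, since squaring this (both sides being nonnegative) returns $x \geq c\ln^2 x$. Writing $u = \sqrt{x}$ and $\ln x = 2\ln u$, the goal becomes
\begin{align*}
    u \geq 2\sqrt{c}\,\ln u,
\end{align*}
a single-$\ln$ inequality of the type treated in \cite[Lemma A.1]{sss-ml-book}. Thus a threshold on $u$, hence on $x = u^2$, that guarantees this inequality suffices.

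Second, for the concave function $\ln$ I would invoke the tangent-line bound $\ln u \leq \ln t + u/t - 1$, valid for every $t>0$, taking $t = 4\sqrt{c}$. Multiplying through by $2\sqrt{c}$ gives
\begin{align*}
    2\sqrt{c}\,\ln u \leq 2\sqrt{c}\,\ln\br{4\sqrt{c}} + \tfrac{u}{2} - 2\sqrt{c},
\end{align*}
whose right-hand side is at most $u$ exactly when $u \geq 4\sqrt{c}\br{\ln\br{4\sqrt{c}}-1}$. Since $c \geq 1$ makes $4\sqrt{c} \geq 4 > e$, this threshold is positive, and squaring shows that $x \geq c\ln^2 x$ holds whenever
\begin{align*}
    x \geq 16\,c\,\br{\ln\br{4\sqrt{c}}-1}^2 \;=:\; T.
\end{align*}

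Finally it remains only to verify that the stated quantity $\max\bc{1,\,4c\ln(2c)+16c\ln^2 c}$ dominates $T$ for all $c \geq 1$; then any $x$ above that quantity forces $x \geq T$ (and $x \geq 1$), and the conclusion follows from the previous step. Substituting $\ln\br{4\sqrt{c}} = 2\ln 2 + \tfrac12\ln c$ and dividing by $16c$, the inequality $T \leq 4c\ln(2c)+16c\ln^2 c$ reduces to the claim that a certain quadratic in $w = \ln c \geq 0$ is nonnegative, which I would confirm by a direct discriminant computation. \emph{This constant-matching step is the only real obstacle}: the square-root reduction and the single tangent-line bound are immediate, whereas certifying the precise threshold in the statement (rather than the cleaner $T$ that my argument produces) requires this elementary quadratic-positivity check. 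This square-root-then-tangent device is exactly what lets the squared-logarithm estimate mirror the single-logarithm lemma of \cite[Lemma A.1]{sss-ml-book} referenced in the text.
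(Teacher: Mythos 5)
Your proposal is correct, and it takes a genuinely different route from the paper. The paper works directly with $f(x) = x - c\ln^2 x$: after disposing of $1 \leq c \leq 1.2$ separately, it invokes \cite[Lemma A.1]{sss-ml-book} to conclude that $x \geq 4c\ln 2c$ implies $x \geq 2c\ln x$, hence $f'(x) \geq 0$, and then checks numerically that $f(16c\ln^2 c) > 0$ for $c \geq 1.2$, so monotonicity finishes the job above the maximum of the two thresholds. Your square-root substitution instead reduces the squared-logarithm inequality to the single-logarithm inequality $u \geq 2\sqrt{c}\,\ln u$ and settles it with one tangent-line bound at $t = 4\sqrt{c}$, yielding the threshold $T = 16c\br{\ln\br{4\sqrt{c}}-1}^2$; the constant-matching step you flag does go through: with $w = \ln c$ and $a = 2\ln 2 - 1$, the difference between the stated threshold and $T$, divided by $c$, equals $(4\ln 2 - 16a^2) + (4 - 16a)w + 12w^2$, a quadratic with positive leading coefficient and discriminant $\approx 4.76 - 18.48 < 0$, so $T$ is dominated for every $c \geq 1$. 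What your version buys: it is self-contained (no external single-logarithm lemma used as a black box, no case split at $c = 1.2$, and no ``easy to check numerically'' step), and it produces the sharper intermediate threshold $T$ before matching constants. What the paper's version buys: it arrives at the exact constants $4c\ln 2c$ and $16c\ln^2 c$ of the statement directly from the monotonicity argument, rather than a posteriori via a quadratic-positivity check.
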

        \begin{proof}
            See Appendix C.
        \end{proof}
        \begin{proof}[Proof of \Cref{cor:si-est-sample-complexity}]
            From \Cref{eq:th-4.2},
            \begin{align*}
                P\br{\abs{\SI_{\mathsf{EMI}}^{(N)}(X_\mathcal{M}^N) - \SI(X_\mathcal{M})} > \epsilon} \leq \delta,
            \end{align*}
            for $n = N/\abs{\mathcal{E}}$ satisfying
            \begin{align}
                n \geq \frac{\abs{\mathcal{X}}}{\epsilon},\ \frac{n}{\log^2 n} \geq \frac{1}{\epsilon^2} \ln\br{\frac{1}{\delta}},\ \frac{n}{\log^2 n} \geq \frac{1}{\Delta_1^2} \ln\br{\frac{\abs{\mathcal{E}}}{\delta}} \label{eq:samp-comp-proof}
            \end{align}
            up to numerical constant factors. Each of the inequalities in \Cref{eq:samp-comp-proof} yields one or more lower bounds for $n = N/\abs{\mathcal{E}}$; the first does so directly, and the latter two upon writing them as $n \geq c \log^2 n$ (where $c$ does not depend on $n$) and using \Cref{lem:tech-1}. The conditions on $\epsilon$ and $\delta$ in \Cref{cor:si-est-sample-complexity}, allow us to drop one of the bounds since it is always weaker than another. Combining all the lower bounds obtained from \Cref{eq:samp-comp-proof} and using $N = \abs{\mathcal{E}} n$ finally results in \Cref{eq:sample-complex}.
        \end{proof}

\section{Closing Remarks}\label{sec:concluding}
        While the Hammersley-Clifford Theorem \cite[Theorem 3.9]{lauritzen1996} can be used to show the equivalence in \Cref{th:g-g} between the MCT definition and the global Markov property, when the joint pmf of $X_\mathcal{M}$ is strictly positive, we show for an MCT that it holds even for pmfs that are \emph{not} strictly positive. The tree structure plays a material role in our proof. In particular, agglomeration of connected subsets of an MCT form an MCT as in \Cref{def:agglomerated-tree} and \Cref{lem:agglom-mct}. The MCT property only involves verifying the Markov condition \Cref{eq:mct} or \Cref{eq:mct-mc-def} for each edge in the tree, and therefore is easier to check than the global Markov property \Cref{eq:g-g}. Joint pmfs that are not strictly positive arise in applications such as function computation when a subset of rvs are determined by another subset. 
    
        \Cref{thm:si-mct} shows that for an MCT, a simple 2-partition achieves the minimum in \Cref{def:si}. While the result in \Cref{thm:si-mct} was known \cite{csiszar-narayan-secrecy-capacities}, our proof uses new techniques and further implies that for \emph{any} partition $\pi$ with disconnected atoms, there is a partition with connected atoms that has $\cII$-value (see \Cref{eq:sub-si-eq}) less than or equal to that of $\pi$. This structural property is stronger than that needed for proving \Cref{thm:si-mct}. 
    
        Our proof technique for \Cref{thm:si-mct} can serve as a stepping stone for analyzing SI for more complicated graphical models in which the underlying graph is not a tree; see \cite{sb-pn-isit-2023}. In particular, the tree structure was used in the proof of \Cref{thm:si-mct} only in Step 1 and \Cref{eq:6} in Step 2.
    
        In \Cref{sec:estimating-si}, we have presented an algorithm for best-arm identification with biased (and asymptotically unbiased) estimates. A uniform sampling strategy and the empirical mutual information estimator were chosen for simplicity. Using bias-corrected estimators like the Miller-Madow or jack-knifed estimator for mutual information \cite{paninski} would improve the bias performance of the algorithm. However, it hurts the constant in the bounded differences inequality that appears in \Cref{lem:emp-mi-conc}. Polynomial approximation-based estimators \cite{jiao-minimax} could also improve sample complexity. Moreover, a successive rejects algorithm \cite{audibert-best-arm-identification}, \cite{boda-prashanth-correlated-bandits} could yield a better sample complexity than uniform sampling for a fixed estimator, as hinted by \cite{audibert-best-arm-identification}, \cite{boda-prashanth-correlated-bandits} in different settings. The precise tradeoff afforded by the choice of a better estimator remains to be understood, as does the sample complexity of more refined algorithms for best arm identification in our setting. Both demand a converse result that needs to take into account estimator bias; this remains under study in our current work. A converse would also settle the question of optimality of the $\exp(-O(N/\log^2 N))$ decay in the probability of error in \Cref{th:estimation-result}.

\appendices

\section{Proof of \Cref{lem:mrf-mi}}\label{app:1}
    \noindent We have
    \begin{align}
        &\II(X_{\mathcal{B}(i \leftarrow j)} \wedge X_{\mathcal{B}(j \leftarrow i)}) \nonumber \\
        &= \II(X_i \wedge X_j) + \II(X_i \wedge X_{\mathcal{B}(j \leftarrow i) \setminus \bc{j}} \cond X_j)\nonumber \\
        &\qquad\qquad+ \II(X_{\mathcal{B}(i \leftarrow j) \setminus \bc{i}} \wedge X_j \cond X_i)\nonumber\\
        &\qquad\qquad\qquad+ \II(X_{\mathcal{B}(i \leftarrow j) \setminus \bc{i}} \wedge X_{\mathcal{B}(j \leftarrow i) \setminus \bc{j}} \cond X_i, X_j) \nonumber \\ 
        &= \II(X_i \wedge X_j) + \II(X_{\mathcal{B}(i \leftarrow j) \setminus \bc{i}} \wedge X_{\mathcal{B}(j \leftarrow i) \setminus \bc{j}} \cond X_i, X_j)\nonumber\\
        &= \II(X_i \wedge X_j) + \HH(X_{\mathcal{B}(i \leftarrow j) \setminus \bc{i}} \cond X_i)\nonumber \\
        &\qquad\qquad\qquad\qquad- \HH(X_{\mathcal{B}(i \leftarrow j) \setminus \bc{i}} \cond X_i, X_{\mathcal{B}(j \leftarrow i)})\label{eq:mct-proof}
    \end{align}
    where the previous two inequalities are by \Cref{eq:mct-mc-def}.

    The claim of \Cref{lem:mrf-mi} would follow from \Cref{eq:mct-proof} upon showing that 
    \begin{align}
        \HH(X_{\mathcal{B}(i \leftarrow j) \setminus \bc{i}} \cond X_i, X_{\mathcal{B}(j \leftarrow i)}) = \HH(X_{\mathcal{B}(i \leftarrow j) \setminus \bc{i}} \cond X_i) \label{eq:mct-mi-proof-claim}. 
    \end{align}
    
    Without loss of generality, set $j$ to be the root of the tree; this defines a \emph{directed} tree whose leaves are from among the vertices (in $\mathcal{M}$) with no descendants. Denote the parent of $i'$ in the (directed) tree by $\parent(i')$. Note that $\parent(i) = j$ in \Cref{eq:mct-mi-proof-claim}. We shall use induction on the \emph{height} of $i'$, i.e., the maximum distance of $i'$ from a leaf of the directed tree, to show that 
    \begin{align}
        &\HH(X_{\mathcal{B}(i' \leftarrow \parent(i')) \setminus \bc{i'}} \cond X_{i'}, X_{\mathcal{B}(\parent(i') \leftarrow i')}) \nonumber\\
        &\qquad\qquad\qquad\qquad\qquad= \HH(X_{\mathcal{B}(i' \leftarrow \parent(i')) \setminus \{i'\}} \cond X_{i'}) \label{eq:mct-mi-proof-claim-2}, 
    \end{align}
    which proves \Cref{eq:mct-mi-proof-claim} upon setting $i'=i$ and $\parent(i') = \parent(i) = j$. 
    
    First, assume that $i'$ is a leaf. Then $\mathcal{B}(i' \leftarrow \parent(i')) \setminus \bc{i'} = \varnothing$ and \Cref{eq:mct-mi-proof-claim-2} holds trivially. 
    \begin{figure}[htb]
        \centering
        \includegraphics[width=0.48\textwidth]{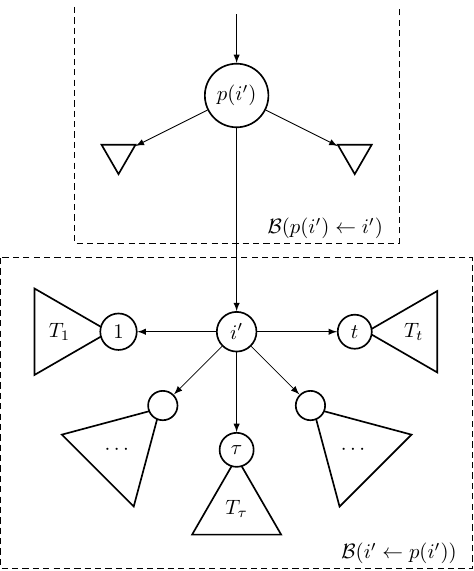}
        \caption{Schematic for proof of \Cref{eq:mct-mi-proof-claim-2}.}
        \label{fig:diag-proof-in-app-a}
    \end{figure}
    
    Next, assume the induction hypothesis that \Cref{eq:mct-mi-proof-claim-2} is true for all vertices at height $< h$, and consider a vertex $i'$ at height $h$. Let $i'$ have children $1, \ldots, t$; each of these vertices is the root of subtree $T_\tau = \mathcal{B}(\tau \leftarrow i')$, $1 \leq \tau \leq t$. See \Cref{fig:diag-proof-in-app-a}. Further, each vertex $\tau$, $1 \leq \tau \leq t$, has height $< h$. Then in \Cref{eq:mct-mi-proof-claim-2},
    \begin{align}
        &\HH(X_{\mathcal{B}(i' \leftarrow \parent(i')) \setminus \bc{i'}} \cond X_{i'}, X_{\mathcal{B}(\parent(i') \leftarrow i')})\nonumber\\
        &= \HH\br{(X_{T_\tau \setminus \bc{\tau}}, X_\tau)_{1 \leq \tau \leq t} \cond X_{i'}, X_{\mathcal{B}(\parent(i') \leftarrow i')}}\nonumber\\
        &= \sum_{\tau = 1}^t \left[\HH \br{X_\tau \cond \br{X_{T_\sigma}}_{1 \leq \sigma \leq \tau - 1}, X_{i'}, X_{\mathcal{B}(\parent(i') \leftarrow i')}}\right.\nonumber\\
        &\left. + \HH\br{X_{T_\tau \setminus \bc{\tau}} \cond X_\tau, \br{X_{T_\sigma}}_{1 \leq \sigma \leq \tau - 1}, X_{i'}, X_{\mathcal{B}(\parent(i') \leftarrow i')}}\right].\label{eq:3.3}
    \end{align}
    In \Cref{eq:3.3}, for each $\tau$, $1 \leq \tau \leq t$, the first term within $\bs{\cdot}$ is 
    \begin{align}
        &\HH \br{X_\tau \cond \br{X_{T_\sigma}}_{1 \leq \sigma \leq \tau - 1}, X_{i'}, X_{\mathcal{B}(\parent(i') \leftarrow i')}}\nonumber\\
        &\qquad = \HH(X_\tau \cond X_{i'}) = \HH \br{X_\tau \cond \br{X_{T_\sigma}}_{1 \leq \sigma \leq \tau - 1}, X_{i'}} \label{eq:app-1-proof-1}
    \end{align}
    by \Cref{eq:mct-mc-def} since
    \begin{align*}
        \br{\bigcup_{\sigma = 1}^{\tau - 1} T_{\sigma}, \mathcal{B}(\parent(i') \leftarrow i')} \subseteq \mathcal{B}(i' \leftarrow \tau)
    \end{align*}
    (see \Cref{fig:diag-proof-in-app-a}). In the second term in $\bs{\cdot}$, we apply the induction hypothesis to vertex $\tau$ which is at height $h-1$. Note that $\parent(\tau) = i'$. Since
    \begin{align*}
        &X_{T_\tau \setminus \bc{\tau}} = X_{\mathcal{B}(\tau \leftarrow \parent(\tau)) \setminus \bc{\tau}}\\
        &\quad\text{and} \br{\br{X_{T_\sigma}}_{1 \leq \sigma \leq \tau - 1}, X_{i'}, X_{\mathcal{B}(\parent(i') \leftarrow i')}} \subseteq \mathcal{B}(\parent(\tau) \leftarrow \tau),
    \end{align*}
    by the induction hypothesis at vertex $\tau$, we get
    \begin{align}
        &\HH\br{X_{T_\tau \setminus \bc{\tau}} \cond X_\tau, \br{X_{T_\sigma}}_{1 \leq \sigma \leq \tau - 1}, X_{i'}, X_{\mathcal{B}(\parent(i') \leftarrow i')}}\nonumber\\
        &= \HH\br{X_{T_\tau \setminus \bc{\tau}} \cond X_\tau} \nonumber\\
        &= \HH\br{X_{T_\tau \setminus \bc{\tau}} \cond X_\tau, \br{X_{T_\sigma}}_{1 \leq \sigma \leq \tau - 1}, X_{i'}} \label{eq:app-1-proof-2}
    \end{align}
    with the last equality being due to \Cref{eq:mct-mc-def}. Substituting \Cref{eq:app-1-proof-1}, \Cref{eq:app-1-proof-2} in \Cref{eq:3.3}, we obtain
    \begin{align*}
        &\HH(X_{\mathcal{B}(i' \leftarrow \parent(i')) \setminus \bc{i'}} \cond X_{i'}, X_{\mathcal{B}(\parent(i') \leftarrow i')})\\
        &= \sum_{\tau = 1}^t \left[\HH \br{X_\tau \cond \br{X_{T_\sigma}}_{1 \leq \sigma \leq \tau - 1}, X_{i'}}\right.\\
        &\left.\qquad\qquad\qquad + \HH\br{X_{T_\tau \setminus \bc{\tau}} \cond X_\tau, \br{X_{T_\sigma}}_{1 \leq \sigma \leq \tau - 1}, X_{i'}}\right]\\
        &= \sum_{\tau=1}^t \HH\br{X_{T_\tau} \cond \br{X_{T_\sigma}}_{1 \leq \sigma \leq \tau - 1}, X_{i'}}\\
        &= \HH(X_{\mathcal{B}(i' \leftarrow \parent(i')) \setminus \{i'\}} \cond X_{i'})
    \end{align*}
    (see \Cref{fig:diag-proof-in-app-a}) which is \Cref{eq:mct-mi-proof-claim-2}. \hfill \qed

\section{Proof of \Cref{lem:mct-local-prop} and \Cref{th:g-g}}\label{app:2}
\begin{proof}[Proof of \Cref{lem:mct-local-prop}]
    Considering first \Cref{eq:mct-local-prop-1}, suppose that vertex $i \in \mathcal{M}$ has $k$ neighbor, with $\mathcal{N}(i) = \bc{i_1, \ldots, i_k}$, $1 \leq k \leq m-1$. Then
    \begin{align*}
        \mathcal{M} \setminus (\bc{i} \cup \mathcal{N}(i)) = \bigcup_{l=1}^k \mathcal{B}(i_l \leftarrow i) \setminus \bc{i_l}.
    \end{align*}
    The claim of the lemma is 
    \begin{align}
        X_i \mc \br{X_{i_u}}_{1 \leq u \leq k} \mc \br{X_{\mathcal{B}(i_l \leftarrow i) \setminus \bc{i_l}}}_{1 \leq l \leq k}. \label{eq:mct-local-prop-proof-1}
    \end{align}
    We have 
    \begin{align}
        &\II\br{X_i \wedge \br{X_{\mathcal{B}(i_l \leftarrow i) \setminus \bc{i_l}}}_{1 \leq l \leq k} \bigcond \br{X_{i_u}}_{1 \leq u \leq k}} \nonumber\\
        &= \sum_{l=1}^k \II\left(X_i \wedge X_{\mathcal{B}(i_l \leftarrow i) \setminus \bc{i_l}} \bigcond \br{X_{\mathcal{B}(i_j \leftarrow i) \setminus \bc{i_j}}}_{1 \leq j \leq l-1},\right. \nonumber \\ 
        &\qquad\qquad\qquad\qquad\qquad\qquad\qquad\qquad\qquad\left.\br{X_{i_u}}_{1 \leq u \leq k}\right)\nonumber\\
        &\leq \sum_{l=1}^k \II\Big(\bs{X_i, \br{X_{\mathcal{B}(i_j \leftarrow i) \setminus \bc{i_j}}}_{1 \leq j \leq l-1}, \br{X_{i_u}}_{1 \leq u \neq l \leq k}} \nonumber \\ 
        &\qquad\qquad\qquad\qquad\qquad\qquad\wedge X_{\mathcal{B}(i_l \leftarrow i) \setminus \bc{i_l}} \bigcond X_{i_l}\Big). \label{eq:mct-local-prop-proof-2}
    \end{align}
    For each $l$, $1 \leq l \leq k$, the rvs within $\bs{\cdot}$ above have indices that lie in $\mathcal{B}(i \leftarrow i_l) \setminus \bc{i_l}$. Hence, by \Cref{lem:mrf-mi} (specifically \Cref{eq:mrf-mi-2b}), each term in the sum in \Cref{eq:mct-local-prop-proof-2} equals zero. This proves \Cref{eq:mct-local-prop-proof-1}. See \Cref{diag:local_prop_diag}.

    \begin{figure}[htbp]
        \centering
        \includegraphics[width=0.48\textwidth]{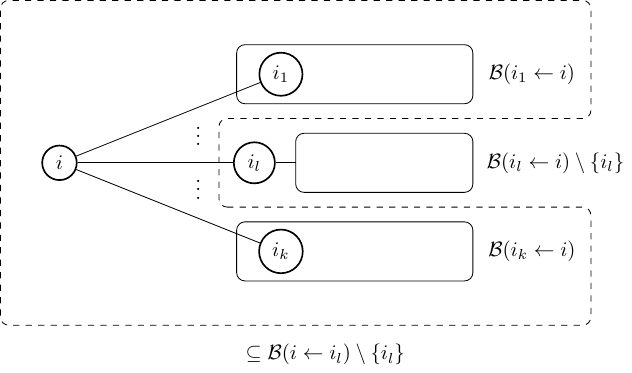}
        \caption{Schematic for the proof of \Cref{lem:mct-local-prop}.}
        \label{diag:local_prop_diag}
    \end{figure}

    Turning to \Cref{eq:mct-local-prop-2}, we have
    \begin{align*}
        &\II\br{X_A \wedge X_{\mathcal{M} \setminus (A \cup \mathcal{N}(A))} \cond \mathcal{N}(A)} \\
        &= \II\br{(X_i, i \in A) \wedge X_{\mathcal{M} \setminus \bigcup_{u \in A} \br{\bc{u} \cup \mathcal{N}(u)}} \bigcond X_{\bigcup_{v \in A} \mathcal{N}(v)}}\\
        &\leq \sum_{i \in A} \II\left(X_i \wedge X_{\bigcup_{j \in A \setminus \bc{i}} (\bc{j} \cup \mathcal{N}(j))},\right.\\  
        &\qquad\qquad\qquad\qquad\qquad\qquad\left.X_{\mathcal{M} \setminus \bigcup_{u \in A} \br{\bc{u} \cup \mathcal{N}(u)}} \bigcond X_{\mathcal{N}(i)}\right)\\
        &= \sum_{i \in A} \II\left(X_i \wedge X_{\br{\bigcup_{j \in A \setminus \bc{i}} (\bc{j} \cup \mathcal{N}(j))}\setminus \mathcal{N}(i)},\right.\\ 
        &\qquad\qquad\qquad\qquad\qquad\qquad\left.X_{\mathcal{M} \setminus \bigcup_{u \in A} \br{\bc{u} \cup \mathcal{N}(u)}} \bigcond X_{\mathcal{N}(i)}\right)\\
        &= 0
    \end{align*}
    by \Cref{eq:mct-local-prop-1} since for each $i \in A$,
    \begin{align*}
        &\br{\br{\bigcup_{j \in A \setminus \bc{i}} (\bc{j} \cup \mathcal{N}(j))}\setminus \mathcal{N}(i)} \\
        &\qquad\cup \br{\mathcal{M} \setminus \bigcup_{u \in A} \br{\bc{u} \cup \mathcal{N}(u)}} \subseteq \mathcal{M} \setminus (\bc{i} \cup \mathcal{N}(i)). \qedhere
    \end{align*}
\end{proof}
\begin{proof}[Proof of \Cref{th:g-g}]
    The converse claim is immediately true upon choosing: for every $(i,j) \in \mathcal{E}$, $A = \mathcal{B}(i \leftarrow j) \setminus \bc{i}$, $S = {i}$, $B = \bc{j}$.

    Turning to the first claim, let 
    \begin{align*}
        A = \bigsqcup_{\alpha = 1}^a A_\alpha, \quad B = \bigsqcup_{\beta = 1}^b B_\beta, \quad S = \bigsqcup_{\sigma = 1}^s S_\sigma
    \end{align*}
    be representations in terms of maximally connected subsets of $A$, $B$ and $S$, respectively. With $N = \mathcal{M} \setminus (A \cup B \cup S)$, let $N = \sqcup_{\nu = 1}^n N_\nu$ be a decomposition into maximally connected subsets of $N$. Denote 
    \begin{align*}
        \mathcal{A} = \bc{A_\alpha, 1 \leq \alpha \leq a}, \quad \mathcal{B} = \bc{B_\beta, 1 \leq \beta \leq b},\\
        \quad \mathcal{S} = \bc{S_\sigma, 1 \leq \sigma \leq s}, \quad \mathcal{N} = \bc{N_\nu, 1 \leq \nu \leq n}.
    \end{align*}
    Referring to \Cref{def:agglomerated-tree} and recalling \Cref{lem:agglom-mct}, the tree $\mathcal{G}' = (\mathcal{M}', \mathcal{E}')$ with vertex set $\mathcal{M}' = \mathcal{A} \cup \mathcal{B} \cup \mathcal{S} \cup \mathcal{N}$ and edge set in the manner of \Cref{def:agglomerated-tree} constitutes an agglomerated MCT. 

    Next, we observe that since each $N_\nu \in \mathcal{N}$, $1 \leq \nu \leq n$, is maximally connected in N, the neighbors of $N_\nu$ in $\mathcal{G}'$ cannot be in $\mathcal{N}$. Therefore, neighbors of a given $N_\nu$ in $\mathcal{G}'$ that are not in $\mathcal{S}$ must be in $\mathcal{A}$ or $\mathcal{B}$. However, $N_\nu$ cannot have a non$\mathcal{S}$ neighbor in $\mathcal{A}$ and also one in $\mathcal{B}$, for then $A$ and $B$ would not be separated by $S$ in $\mathcal{G}$. Accordingly, for \emph{each} $N_\nu$ in $\mathcal{N}$, if its non$\mathcal{S}$ neighbors in $\mathcal{G}'$ are only in $\mathcal{A}$, add $N_\nu$ to $\mathcal{A}$; let $N'$ be the union of all such $N_\nu$s. Consider $A' = A \cup N'$ and write $A' = \sqcup_{\alpha=1}^{a'} A_\alpha'$ where the $A_\alpha'$s are maximally connected subsets of $A'$. Let $\mathcal{A}' = \bc{A_\alpha', 1\leq \alpha \leq a'}$.

    Now note that $\mathcal{A}'$ and $\mathcal{B}$ are separated in $\mathcal{G}'$ by $\mathcal{S}$. Thus, to establish \Cref{eq:g-g}, it suffices to show the (stronger) assertion 
    \begin{align}
        X_{\mathcal{A}'} \mc X_\mathcal{S} \mc X_\mathcal{B}. \label{eq:global-proof-1}
    \end{align}
    By the description of $\mathcal{A}'$, each of its components (maximal subsets of $A'$) has its neighborhood in $\mathcal{G}'$ that is contained \emph{fully} in $\mathcal{S}$. Let $\tilde{\mathcal{S}} \subseteq \mathcal{S}$ denote the union of all such neighborhoods. Then, by \Cref{lem:mct-local-prop} (\Cref{eq:mct-local-prop-2}) applied to the agglomerated tree $\mathcal{G}'$, since there is no edge in $\mathcal{G}'$ that connects any two elements of $\mathcal{A}'$,
    \begin{align*}
        X_{\mathcal{A}'} \mc X_{\tilde{\mathcal{S}}} \mc X_{\mathcal{M}' \setminus (\mathcal{A}' \cup \tilde{\mathcal{S}})}
    \end{align*}
    so that
    \begin{align}
        0 &= \II(X_{\mathcal{A}'} \wedge X_{\mathcal{M}' \setminus (\mathcal{A}' \cup \tilde{\mathcal{S}})} \cond X_{\tilde{\mathcal{S}}}) \nonumber\\
        &= \II(X_{\mathcal{A}'} \wedge X_{\mathcal{M}' \setminus (\mathcal{A}' \cup \tilde{\mathcal{S}})}, X_{\mathcal{S}\setminus\tilde{\mathcal{S}}} \cond X_{\tilde{\mathcal{S}}}) \nonumber\\
        &\geq \II(X_{\mathcal{A}'} \wedge X_{\mathcal{M}' \setminus (\mathcal{A}' \cup \mathcal{S})} \cond X_\mathcal{S}) \label{eq:global-proof-2}
    \end{align}
    since $\tilde{\mathcal{S}} \subseteq \mathcal{S}$. Finally, \Cref{eq:global-proof-2} implies \Cref{eq:global-proof-1} as $\mathcal{B} \subseteq \mathcal{M}'\setminus(\mathcal{A}' \cup \mathcal{S})$.
\end{proof}
\section{Proof of \Cref{lem:tech-1}} \label{app:3}
\begin{proof}
    For $1 \leq c \leq 1.2$, $x \geq c \ln^2 x$ holds unconditionally; so assume that $c \geq 1.2$. Consider the function $f(x) = x - c \ln^2 x$. Then, using \cite[Lemma A.1]{sss-ml-book}, $x \geq 4c \ln 2c$ implies $x \geq 2c \ln x$ which, in turn, implies $f'(x) \geq 0$. Therefore, for $x \geq 4c \ln c$, $f(x)$ is increasing in $x$. It is easy to check numerically that $f(16c \ln^2 c)$ is positive for $c \geq 1.2$. Thus, for all $x \geq \max\{4c \ln 2c, 16c \ln^2 c\}$, $f(x) \geq 0$ and so $x \geq c \ln^2 x$. 
\end{proof}



\ifCLASSOPTIONcaptionsoff
  \newpage
\fi

\begin{IEEEbiographynophoto}{Sagnik Bhattacharya}
received the Bachelor of Technology in Electrical Engineering from the Indian Institute of Technology Kanpur, India, in 2019. He is currently a PhD candidate in the Department of Electrical and Computer Engineering at the University of Maryland, College Park. His research interests are in information theory, statistical learning, and their practical applications.
\end{IEEEbiographynophoto}


\begin{IEEEbiographynophoto}{Prakash Narayan}
  received the Bachelor of Technology degree in Electrical Engineering from the
  Indian Institute of Technology, Madras in $1976$.
  He received the M.S. degree in Systems Science and Mathematics in $1978$ and
  the D.Sc. degree in Electrical Engineering in $1981$, both from Washington
  University, St. Louis, MO.
  
  He is Professor of Electrical and Computer Engineering at the University
  of Maryland, College Park, with a joint appointment at the Institute for
  Systems Research. His research interests are in network information theory,
  coding theory, communication theory, communication networks, statistical learning, 
  and cryptography.
\end{IEEEbiographynophoto}




\end{document}